\theoremstyle{definition}
\newtheorem{thm}{Theorem}[section]
\newtheorem{prop}{Proposition}[section]
\newtheorem{lemma}{Lemma}[section]
\newtheorem{cor}{Corollary}[section]
\newtheorem{remark}{Remark}[section]
\newtheorem{defn}{Definition}[section]
\newtheorem{as}{Assumption\!}
\numberwithin{equation}{section}
\def\R{\mathbb{R}}
\def\DD{\mathbb{D}}
\def\P{{\mathbb{P}}}
\def\e{{\epsilon}}
\def\l{\left}
\def\r{\right}
\def\a{\alpha}
\def\g{\gamma}
\def\th{\vartheta}
\def\s{\sigma}
\def\la{\lambda}
\def\toP{\stackrel{p}{\longrightarrow}}
\def\E{\mathbb{E}}
\def\wh#1{\widehat{#1}}
\def\df{\mathrm{d}}
\title{$M$-Estimation based on quasi-processes from discrete samples of L\'evy processes}
\author{Yasutaka Shimizu\footnote{Department of Applied Mathematics, Waseda University. E-mail:{\tt shimizu@waseda.jp}}\ \ 
and Hiroshi Shiraishi\footnote{Department of Mathematics, Keio University. E-mail:{\tt shiraishi@math.keio.ac.jp}}}
\date{\today}
\begin{document}

\maketitle 
\begin{abstract}
We propose a novel estimation framework for path-dependent functionals of L\'evy processes from discretely observed data. Traditional approaches rely on Monte Carlo simulation of full paths, which requires complete model specification and heavy computation. In contrast, our \emph{quasi-process} method constructs pseudo-paths directly from observed increments by random permutation, preserving the increment distribution while enabling repeated evaluation of functionals. Under a high-frequency, long-term sampling regime, we establish weak convergence of the quasi-process to the true L\'evy process and prove consistency and asymptotic normality of the resulting $M$-estimator. This bootstrap-like approach provides a practical and computationally efficient tool for inference from a single trajectory and offers promising extensions to multivariate modeling, machine learning integration, and risk management.
\begin{flushleft}
{ \it Keywords:} $M$-estimation; L\'evy processes; resampling; quasi-process; discrete observations. 
\vspace{1mm}\\
{\it MSC2010:} {\bf 62M20}; 60G51, 62G20. 
\end{flushleft}
\end{abstract}

\section{Introduction} \label{sec:intro}

L\'evy processes play a central role in stochastic modeling, with applications in finance, insurance, and risk theory. Many problems in these fields involve the estimation of \emph{path-dependent functionals} of a process from observed data. Such functionals arise naturally in the evaluation of financial derivatives, in the optimization of dividend or reinsurance strategies, and in the assessment of ruin-related quantities. A key feature of these problems is that the target parameter is defined through an expectation over the \emph{entire path} of the process, making direct estimation from limited discrete observations challenging.

Let $(\Omega, \mathcal{F}, \mathbb{P}; \mathbb{F})$ be a stochastic basis with filtration $\mathbb{F}=(\mathcal{F}_t)_{t \ge 0}$, and let $X=(X_t)_{t \ge 0}$ be an $\mathbb{F}$-L\'evy process starting at $X_0=u$ of the form
\[
X_t = u + \widetilde{X}_t,
\]
where $\widetilde{X}$ has the characteristic exponent
\begin{align}
\log \mathbb{E}[e^{it \widetilde{X}_1}] 
=  i \mu t - \frac{\sigma^2}{2}t^2 
+ \int_{\mathbb{R}} \left( e^{i tz} - 1 - itz \, \mathbf{1}_{\{|z| \le 1\}} \right) \nu(\mathrm{d} z). \label{ch}
\end{align}
Let $\mathbb{D}_T := D([0,T_n])$ be the space of c\`adl\`ag functions on $[0,T_n]$, equipped with the Borel $\sigma$-field $\mathcal{D}_T$ generated by the Skorokhod topology; see Billingsley~\cite{b99}. For a measurable functional $f: \mathbb{D}_T \to \mathbb{R}$, we write
\[
Pf := \int_{\mathbb{D}_T} f(x) \, P(\mathrm{d}x) = \mathbb{E}[f(X)], \quad P := \mathbb{P} \circ X^{-1}.
\]

Let $\Theta \subset \mathbb{R}^d$ and consider a measurable map $h_\theta: \mathbb{D}_T \times \Theta \to \mathbb{R}$. The \emph{true parameter} is defined as
\begin{align}
\theta_0 = \arg\min_{\theta \in \Theta} P h_\theta. \label{eq:true}
\end{align}
Our goal is to estimate $\theta_0$ from discrete observations of $X$.

Such a formulation encompasses many problems of practical relevance. For instance, in finance, consider the optimal exercise strategy for a \emph{perpetual American put option} with strike price $K>0$. If the holder exercises when $X_t < \theta$ for some threshold $\theta>0$, with stopping time $\tau^\theta := \inf\{ t>0 : X_t < \theta \}$, the optimal threshold $\theta_*$ satisfies
\begin{align}
\theta_* = \arg\max_{\theta \in [0,K]} 
\mathbb{E}\left[ e^{-r \tau^\theta} (K - X_{\tau^\theta})_+ \mathbf{1}_{\{\tau^\theta < \infty\}} \right], \label{call}
\end{align}
where $r>0$ is the interest rate; see Gerber and Shiu~\cite{gs97}. In insurance mathematics, one may consider the expected present value of \emph{ruin-related losses} up to the ruin time $\tau^\theta := \inf\{t>0: X_t < f(\theta)\}$:
\[
h_\theta(X) = \int_0^{\tau^\theta} e^{-rt} U_\theta(X_t) \, \mathrm{d}t,
\]
where $f:\Theta \to \mathbb{R}$ and $U_\theta$ specifies the loss or dividend structure; see Feng~\cite{f11} and Feng and Shimizu~\cite{fs13}. For example, under a dividend strategy paying proportion $a \in (0,1)$ whenever $X_t \ge \theta$, we have
\begin{align}
Ph_\theta = a \int_0^\infty e^{-rt} \mathbb{P}\left(X_t \ge \theta, \, t \le \tau^\theta\right) \, \mathrm{d}t. \label{dividend}
\end{align}

In these examples, the objective functionals \eqref{call} and \eqref{dividend} involve the stopping time $\tau^\theta$, which depends on the entire path of $X$ and is generally unobservable from discrete data. If one could simulate many independent full paths of $X$, the functionals could be approximated via Monte Carlo methods. This classical simulation-based approach has been widely used; however, it requires complete knowledge of the model and extensive computational resources. In contrast, practical applications usually involve only a single discretely observed trajectory, and model uncertainty often makes direct simulation infeasible. This mismatch between theoretical formulations and practical data availability is a central obstacle in modern stochastic modeling.

\medskip
\noindent
\textbf{Our contribution.} In this paper we propose a \emph{quasi-process} method that constructs pseudo-sample paths from discrete observations. Given $X_{t_0}, X_{t_1}, \dots, X_{t_n}$ with $t_k = kh_n$ and increments $\Delta_k X = X_{t_k} - X_{t_{k-1}}$, we form step functions jumping at $t_k$ with size $\Delta_k X$. By randomly permuting these increments, we generate artificial paths that preserve the increment distribution of $X$ but allow repeated evaluation of path functionals. This bootstrap-like procedure enables estimation of $P h_\theta$ and construction of an $M$-estimator for $\theta_0$ without simulating the true process. Compared with traditional Monte Carlo simulation, our method directly reuses the observed data, avoids assumptions on the full model dynamics, and provides a novel route to inference when only one observed trajectory is available.

\medskip
\noindent
\textbf{Theoretical properties.} Under a \emph{high-frequency, long-term} (HFLT) sampling regime—where $h_n \to 0$ and $T_n = n h_n \to \infty$ as $n \to \infty$—we establish weak convergence of the quasi-process to the true L\'evy process in the Skorokhod topology. Building on this convergence, we derive sufficient conditions for the consistency and asymptotic normality of the proposed $M$-estimator. These results apply to a broad class of path-dependent functionals, including those with random horizons such as ruin times. Notably, the framework provides a unified route to statistical inference for problems that were previously intractable with discretely observed data.

\medskip
\noindent
\textbf{Practical implications and future perspectives.} Beyond its theoretical elegance, the proposed methodology has broad potential in applications. In finance, it may serve as a robust tool for pricing and hedging exotic derivatives, where payoffs depend on complex stopping times or cumulative dividends. In insurance, it provides a new way to evaluate dividend strategies, reinsurance treaties, and ruin-related risk measures from high-frequency data. Since the quasi-process approach requires no simulation of the underlying dynamics, it can substantially reduce computational costs and enhance stability in large-scale actuarial evaluations. Moreover, because the method relies only on increment distributions, it is flexible enough to be adapted to processes with jumps, stochastic volatility, or even heavy-tailed behavior.

Looking forward, the quasi-process methodology opens the door to several promising research directions. One avenue is the extension to multivariate L\'evy processes, enabling joint modeling of multiple risk sources such as equity, credit, and insurance portfolios. Another is its integration into machine learning frameworks, where quasi-process resampling could serve as a data augmentation technique for path-dependent features. Finally, in regulatory and risk management contexts, the approach offers a principled way to link discrete financial or insurance records to theoretically grounded path-based risk measures, potentially informing solvency assessments and capital allocation under modern regimes such as Solvency II.

The remainder of the paper is organized as follows. Section~\ref{sec:quasi-proc} defines the quasi-process and the associated $M$-estimator. Section~\ref{sec:weak} establishes the weak convergence of quasi-processes under HFLT sampling and illustrates this via simulations. Section~\ref{sec:main} presents the main statistical results, including consistency and asymptotic normality, along with sufficient conditions. In Section~\ref{sec:example}, we verify these conditions in concrete settings, including the dividend problem~\eqref{dividend}. Finally, we conclude with remarks on further applications and possible extensions.

\subsection*{Notation}

Throughout the paper, we use the following notation:

\begin{itemize}
\item For two nonnegative quantities $A$ and $B$, we write $A \lesssim B$ if there exists a universal constant $c>0$ such that $A \le c  B$. 

\item For a $k$-th order tensor $x=(x_{i_1,i_2,\dots,i_k})_{i_1,\dots,i_k=1,\dots,d} \in \mathbb{R}^d \otimes \cdots \otimes \mathbb{R}^d$, we define the Euclidean norm
\[
|x| = \left( \sum_{i_1=1}^d \cdots \sum_{i_k=1}^d x_{i_1,i_2,\dots,i_k}^2 \right)^{1/2}.
\]

\item For $T>0$ and $x \in \mathbb{D}_T \cup \mathbb{D}_\infty$, we write
\[
\|x\|_T = \sup_{t \in [0,T_n]} |x(t)|, 
\quad 
\|x\|_\infty = \sup_{t \in [0,\infty)} |x(t)|.
\]

\item For $x,y \in \mathbb{D}_T$, we denote by $\varrho_T(x,y)$ the \emph{Skorokhod metric}:
\[
\varrho_T(x,y) = \inf_{\lambda \in \Lambda_T} \max \left\{ \| x \circ \lambda - y \|_T, \ \|\lambda - I\|_T \right\},
\]
where $\Lambda_T$ is the set of strictly increasing, continuous mappings of $[0,T_n]$ onto itself, and $I$ denotes the identity map $I(s)=s$. 

\item For $p \ge 1$ and a measure $Q$ on a measurable space $\mathcal{X}$, we define
\[
\|f\|_{L^p(Q)} := \Bigg( \int_{\mathcal{X}} |f(x)|^p \, Q(\mathrm{d}x) \Bigg)^{1/p}, \quad f:\mathcal{X} \to \mathbb{R}.
\]
The space $L^p(Q)$ consists of all measurable functions $f$ with $\|f\|_{L^p(Q)}<\infty$. For $f \in L^1(Q)$, we write
\[
Qf := \int_{\mathcal{X}} f(x) \, Q(\mathrm{d}x).
\]

\item For a function $f:\mathcal{X}\times \mathcal{Y} \to \mathbb{R}$, we write $f \in C^{n,m}(\mathcal{X}\times \mathcal{Y})$ if $f$ is $n$-times continuously differentiable with respect to $x\in \mathcal{X}$ and $m$-times continuously differentiable with respect to $y\in \mathcal{Y}$. In particular, $C^0$ denotes the set of continuous functions. Moreover, $f \in C_b^{n,m}$ means that $f \in C^{n,m}$ and that $f$ together with all partial derivatives up to order $(n,m)$ are bounded.
\end{itemize}

\section{Contrast functions via ``quasi-processes"}\label{sec:quasi-proc}

We observe a L\'evy process $X$ starting from $u$ at equidistant time points: 
\[
\{X_{t_k}\}_{k=0,1,\dots,n}, \quad 0 = t_0 < t_1 < \cdots < t_n = T, \quad h_n := t_k - t_{k-1}. 
\] 
Let $\mathbb{X} = (\Delta_1 X, \Delta_2 X, \dots, \Delta_n X)$ denote the vector of increments, where $\Delta_k X := X_{t_k} - X_{t_{k-1}}$.  
Consider the set of all permutations of $(1,2,\dots,n)$: 
\[
\Lambda_n := \Bigg\{ i_m = 
\begin{pmatrix}
1 & 2 & \dots & n \\
i_m(1) & i_m(2) & \dots & i_m(n)
\end{pmatrix} 
\,\Bigg|\, m=1,2,\dots,n! \Bigg\}.
\]
Since the increments $(\Delta_k X)_{1 \le k \le n}$ are i.i.d., the vector $\mathbb{X}$ is \emph{exchangeable}; that is, for any permutation $i \in \Lambda_n$,  
\[
i(\mathbb{X}) := (\Delta_{i(1)}X, \dots, \Delta_{i(n)}X)
\]
has the same distribution as $\mathbb{X}$.

\begin{defn}\label{def:quasi-path}
For given increments $\mathbb{X}$ and a permutation $i \in \Lambda_n$,  
the process $\widehat{X}^{i,n} = (\widehat{X}^{i,n}_t)_{t \ge 0}$ defined by
\[
\widehat{X}^{i,n}_t = u + \sum_{k=1}^n \Delta_{i(k)}X \cdot \mathbf{1}_{[t_k,\infty)}(t)
\]
is called the \emph{quasi-process of $X$ associated with $i \in \Lambda_n$}.
\end{defn}

The path of a quasi-process $\widehat{X}^{i,n}$ is a right-continuous step function in $\mathbb{D}_\infty$, with jumps of size $\Delta_{i(k)}X$ occurring at times $t_k$ ($k=1,2,\dots,n$).  
For a permutation subset $A \subset \Lambda_n$, we define the empirical measure of quasi-processes indexed by $A$ as
\[
P_A := \frac{1}{\# A}\sum_{i \in A} \delta_{\widehat{X}^{i,n}}, 
\] 
where $\delta_x$ is the Dirac measure at $x \in \mathbb{D}_\infty$.  
Given a sequence of permutation sets $\{A_n\}_{n \in \mathbb{N}}$, we simply write
\begin{equation}\label{Pn}
P_n := P_{A_n}.
\end{equation}

\begin{defn}
Given increments $\mathbb{X}$ and permutation sets $\{A_n\}_{n \in \mathbb{N}}$,  
the \emph{minimum contrast estimator} of $\theta_0$ in \eqref{eq:true} is defined by
\begin{equation}\label{M-est}
\widehat{\theta}_n := \arg\min_{\theta \in \Theta} P_n h_\theta,
\end{equation}
where $h_\theta(x) = h(x,\theta)$ for $x \in \mathbb{D}_\infty$.
\end{defn}

In certain cases, $P h_\theta$ reduces to a functional of the L\'evy measure $\nu$ of the form
\[
P h_\theta = \int_{\mathbb{R}} H_\theta(z)\,\nu(\mathrm{d}z),
\]
for some function $H_\theta:\mathbb{R}\to\mathbb{R}$. In such situations, $P h_\theta$ can be estimated directly as
\[
\widehat{P h_\theta} = \frac{1}{n h_n}\sum_{k=1}^n H_\theta^n(\Delta_k X)   \xrightarrow{P}   P h_\theta,
\]
with $H^n_\theta \to H_\theta$ under appropriate regularity conditions; see Comte and Genon-Catalot~\cite{cg11}, Jacod~\cite{j07}, Shimizu~\cite{s09}, and Kato and Kurisu~\cite{kk20}.  
In such settings, the quasi-process construction is unnecessary because the estimator is already exchangeable with respect to increments.  
Our approach is particularly useful when $h_\theta$ depends on the \emph{entire path} of $X$. For example, in the dividend problem \eqref{dividend}, many observations satisfying $X_t \ge \theta$ may be required. Using quasi-processes $\{\widehat{X}^{i,n}\}_{i \in A_n}$, one can generate numerous ``quasi-samples'' with $\widehat{X}^{i,n}_t \ge \theta$, thereby improving the approximation of $P(X_t \ge \theta)$.

\section{Weak convergence for quasi-processes}\label{sec:weak}
\subsection{Theoretical results}
We consider the following \emph{high-frequency long-term} (HFLT) sampling scheme:
\begin{flushleft}
\textbf{(HFLT)} \quad $h_n \to 0$ and $T_n := n h_n \to \infty$ as $n \to \infty$.
\end{flushleft}

\begin{thm}\label{thm:wc}
Under the sampling scheme (HFLT), it holds for any sequence of permutations $\{i^n\}\subset \Lambda_n$ that
\[
\widehat{X}^{i^n,n}  \Rightarrow  X \quad \text{in } \mathbb{D}_\infty, \quad n \to \infty.
\]
\end{thm}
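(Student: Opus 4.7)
My plan is to reduce to the identity permutation via exchangeability and then invoke path-by-path Skorokhod convergence of the step-function approximation of any c\`adl\`ag function to itself.

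First I would exploit the fact that the increments $(\D_k X)_{k=1}^n$ are i.i.d. for each fixed $n$, so the random vector $(\D_{i^n(k)} X)_{k=1}^n$ has the same joint law as $(\D_k X)_{k=1}^n$ for any deterministic $i^n \in \Lambda_n$. Since the quasi-process is a measurable function of this vector together with the deterministic grid $\{t_k\}$, one immediately obtains the equality in law $\wh{X}^{i^n,n} \stackrel{d}{=} \wh{X}^{\mathrm{id}_n,n}$, where $\mathrm{id}_n$ denotes the identity permutation. A direct computation shows that on $[t_j, t_{j+1})$ (with $t_{n+1}:=\infty$) one has $\wh{X}^{\mathrm{id}_n,n}_t = X_{t_j}$; thus $\wh{X}^{\mathrm{id}_n,n}$ is exactly the right-continuous piecewise-constant interpolation of the sampled values of $X$, frozen at the value $X_T$ after time $T$.

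Next, I would establish that $\wh{X}^{\mathrm{id}_n,n} \to X$ almost surely in $(\DD_\infty, \vr_\infty)$. Fix $T_0 > 0$ that is a.s. a continuity point of $X$ (possible since $X$ has at most countably many jumps a.s.). Under (HFLT), eventually $T = nh_n > T_0$, so the restriction of $\wh{X}^{\mathrm{id}_n,n}$ to $[0, T_0]$ is a genuine step-function approximation of $X|_{[0,T_0]}$ with mesh $h_n \to 0$. Path-by-path, I would verify $\vr_{T_0}(\wh{X}^{\mathrm{id}_n,n}, X) \to 0$ by constructing a time-change $\la_n \in \Lambda_{T_0}$ that moves each $t_k$ lying in a small neighbourhood of a ``large'' jump of $X$ to the jump location itself, and is the identity elsewhere. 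Combining the estimate $\|\la_n - I\|_{T_0} \le h_n$ with a bound on $\|\wh{X}^{\mathrm{id}_n,n}\circ \la_n - X\|_{T_0}$ in terms of the $\e$-modulus of continuity of $X$ (which can be made arbitrarily small by choosing $\e$ small, since only finitely many jumps of size $\ge \e$ lie in $[0, T_0]$), one obtains a.s. Skorokhod convergence on each such compact sub-interval. Since continuity points of $X$ are cofinal in $[0, \infty)$ a.s., this upgrades to convergence in $\DD_\infty$ by the standard characterization of that topology.

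Almost-sure convergence implies weak convergence, so $\wh{X}^{\mathrm{id}_n,n} \leadsto X$, and the equality in law from the first step transfers the weak convergence to $\wh{X}^{i^n,n}$. The main obstacle that I anticipate is the bookkeeping in the time-change construction: one must split the jumps of $X$ on $[0, T_0]$ into a ``large'' set (finitely many, to be realigned by $\la_n$ with the nearest sample point) and a ``small'' set (handled by the uniform modulus of continuity), and then verify both the sup-norm estimate and $\|\la_n - I\|_{T_0} \to 0$ for a.e. sample path; the long-horizon requirement $nh_n \to \infty$ in (HFLT) is what allows this to be done for arbitrarily large $T_0$.
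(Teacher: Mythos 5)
Your proposal is correct, and it takes a genuinely different route from the paper's proof. The paper decomposes $X = u + \mu t + \sigma W_t + S_t$ via L\'evy--It\^o, then carries out the standard two-step programme for weak convergence in $\DD_\infty$: finite-dimensional convergence (using stochastic continuity and Slutsky's lemma) plus uniform tightness (via the three conditions of Billingsley's Theorem~16.8, with an explicit treatment of the compound-Poisson large-jump part and the Gaussian part). Your argument bypasses all of this: by exchangeability of the i.i.d.\ increments you identify $\wh X^{i^n,n} \stackrel{d}{=} \wh X^{\mathrm{id}_n,n}$ for each $n$, observe that $\wh X^{\mathrm{id}_n,n}$ is exactly the right-continuous discretization of $X$ on the mesh $\{t_k\}$ (frozen at $X_T$ after $T$), and then invoke the deterministic, path-by-path fact that such a discretization of a c\`adl\`ag function converges to it in the Skorokhod topology as the mesh shrinks and the horizon grows. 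This gives a.s.\ convergence for the identity permutation, hence weak convergence, and equality in law for each $n$ transfers the conclusion to an arbitrary deterministic permutation sequence.

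What each approach buys: the paper's argument is self-contained in probabilistic terms and makes the role of the L\'evy structure explicit, but it is heavier (separate handling of drift, Gaussian, and jump parts, plus the stopping-time/modulus estimates for tightness). Your reduction is shorter and conceptually cleaner, isolating the only probabilistic input as exchangeability and a.s.\ continuity of $X$ at fixed times. The one piece you leave as a sketch is the time-change bookkeeping for the step-function-to-path Skorokhod convergence; that is indeed standard (splitting jumps into finitely many of size $\ge \ve$ to be realigned with nearby grid points, with the remainder controlled by the $\ve$-modulus), and it relies only on $h_n \to 0$ and, for $\DD_\infty$, on $nh_n \to \infty$ so that the freezing at $X_T$ eventually happens beyond any fixed continuity horizon $T_0$. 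It would be worth noting explicitly that jump times of $X$ a.s.\ avoid the countable deterministic grid, so the realignment step is well defined for a.e.\ sample path. With that caveat spelled out, the argument is complete.
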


\begin{proof}
The L\'evy process $X$ admits the following decomposition
\[
X_t  =  u + \mu t + \sigma W_t + S_t,
\]
where $\mu \in \mathbb{R}$, $\sigma \ge 0$, $W$ is a Wiener process, and $S$ is a pure-jump L\'evy process independent of $W$, with characteristic exponent
\[
\log \mathbb{E}\!\left[e^{i t S_1}\right]  =  \int_{\mathbb{R}} \Big( e^{i t z} - 1 - i t z \,\mathbf{1}_{\{|z|\le 1\}} \Big)\,\nu(\mathrm{d}z).
\]
Fix an arbitrary sequence $i^n \in \Lambda_n$ and write, for brevity,
\[
X^n  :=  \widehat{X}^{i^n,n}, \quad 
\Delta_k W := W_{t_k}-W_{t_{k-1}}, \quad \Delta_k S := S_{t_k}-S_{t_{k-1}}.
\]
Then by definition of the quasi-process,
\[
X^n_t  =  u + \sum_{k=1}^{n} \big(\mu h_n + \sigma\, \Delta_{i^n(k)} W + \Delta_{i^n(k)} S \big)\,\mathbf{1}_{[t_k,\infty)}(t).
\]

Since $X$ is stochastically continuous, it is sufficient to verify the following two conditions (see Billingsley~\cite{b99}, Section~13):
\begin{itemize}
\item[(a)] \emph{Finite-dimensional convergence:} for any $0<s_1<\cdots<s_d$,
\[
(X^n_{s_1},\dots,X^n_{s_d})  \Rightarrow  (X_{s_1},\dots,X_{s_d}).
\]
\item[(b)] \emph{Tightness:} the sequence $\{X^n\}$ is tight in $\mathbb{D}_\infty$.
\end{itemize}

\medskip
\noindent\textbf{(a) Finite-dimensional convergence.}
We shall firstly show the case $d=2$. Fix $0<s<t$, and prove
\begin{equation}\label{eq:2-conv}
\big( X^n_s,\, X^n_t - X^n_s \big) \ \Rightarrow\ \big( X_s,\, X_t - X_s \big),
\end{equation}
which then implies $(X^n_s,X^n_t)\Rightarrow (X_s,X_t)$ by the continuous mapping theorem.

For $n$ large enough, there exist indices $i<j$ with
\[
s \in [t_{i-1},t_i), \qquad t \in [t_{j-1},t_j).
\]
By construction of $X^n$, we have, for $s \in [t_{i-1},t_i)$,
\[
X^n_s
= u + \sum_{k=1}^{i-1} \big(\mu h_n + \sigma\, \Delta_{i^n(k)} W + \Delta_{i^n(k)} S \big)
\stackrel{d}{=} u + \sum_{k=1}^{i-1} \big(\mu h_n + \sigma\, \Delta_{k} W + \Delta_{k} S \big)
\stackrel{d}{=} X_{t_{i-1}}.
\]
Hence
\[
X^n_s \stackrel{d}{=} X_{t_{i-1}} = X_s + \big(X_{t_{i-1}}-X_s\big).
\]
Since $t_{i-1}\uparrow s$ as $n\to\infty$ and $X$ is stochastically continuous, $X_{t_{i-1}}-X_s \toP 0$. By Slutsky’s lemma (van der Vaart~\cite{v98}, Theorem~2.7(iv)), we conclude
\begin{equation}\label{eq:first-marginal}
X^n_s \Rightarrow X_s.
\end{equation}

Next, for $t \in [t_{j-1},t_j)$ and $s \in [t_{i-1},t_i)$ with $i<j$, we have
\begin{align*}
X^n_t - X^n_s
&= \sum_{k=i}^{j-1} \big(\mu h_n + \sigma\, \Delta_{i^n(k)} W + \Delta_{i^n(k)} S \big)
 \stackrel{d}{=} \sum_{k=i}^{j-1} \big(\mu h_n + \sigma\, \Delta_{k} W + \Delta_{k} S \big) \\
&\stackrel{d}{=} \mu (t_{j-1}-t_{i-1}) + \sigma (W_{t_{j-1}}-W_{t_{i-1}}) + (S_{t_{j-1}}-S_{t_{i-1}})
= X_{t_{j-1}} - X_{t_{i-1}}.
\end{align*}
Thus
\[
X^n_t - X^n_s \stackrel{d}{=} (X_t - X_s) + \big\{ (X_{t_{j-1}}-X_t) + (X_s - X_{t_{i-1}}) \big\}.
\]
Again, by stochastic continuity, $(X_{t_{j-1}}-X_t) \toP 0$ and $(X_s - X_{t_{i-1}}) \toP 0$, hence by Slutsky,
\begin{equation}\label{eq:second-marginal}
X^n_t - X^n_s \Rightarrow X_t - X_s.
\end{equation}

It remains to check the joint convergence in \eqref{eq:2-conv}. Note that
\begin{align*}
X^n_s &= u + \sum_{k=1}^{i-1} \Big( \mu h_n + \sigma\,\Delta_{i^n(k)} W + \Delta_{i^n(k)} S \Big), \\
X^n_t - X^n_s &= \sum_{k=i}^{j-1} \Big( \mu h_n + \sigma\,\Delta_{i^n(k)} W + \Delta_{i^n(k)} S \Big).
\end{align*}
These are sums over \emph{disjoint} sets of (permuted) independent increments; hence they are independent for each $n$. The limit pair $(X_s, X_t-X_s)$ is also independent by the independent increment property of $X$. Therefore, combining \eqref{eq:first-marginal}--\eqref{eq:second-marginal} with the Cram\'er-Wold device (or characteristic functions) yields \eqref{eq:2-conv}.

\medskip
\noindent\emph{Extension to general $d\ge 3$.}
Fix $0<s_1<\cdots<s_d$ and let $s_r\in[t_{i_r-1},t_{i_r})$ with $1\le r\le d$ ($i_0:=1$).
Define block increments
\[
\Delta^n_r:=\sum_{k=i_{r-1}}^{i_r-1}\big(\mu h_n + \sigma\,\Delta_{i^n(k)}W + \Delta_{i^n(k)}S\big),
\qquad
\Delta_r:=X_{s_r}-X_{s_{r-1}}\ (s_0:=0).
\]
Exactly as above, for each fixed $r$ we have $\Delta^n_r \Rightarrow \Delta_r$ by stochastic continuity and Slutsky's lemma. Moreover, for each $n$ the vector $(\Delta^n_1,\ldots,\Delta^n_d)$ has \emph{independent} components (disjoint sets of independent increments), so its joint characteristic function factorizes:
\[
\E\exp\!\Big(i\sum_{r=1}^d \theta_r \Delta^n_r\Big)
=\prod_{r=1}^d \E e^{i\theta_r \Delta^n_r}
\ \longrightarrow\
\prod_{r=1}^d \E e^{i\theta_r \Delta_r}
=\E\exp\!\Big(i\sum_{r=1}^d \theta_r \Delta_r\Big),
\]
for every $(\theta_1,\ldots,\theta_d)\in\R^d$. Hence $(\Delta^n_1,\ldots,\Delta^n_d)\Rightarrow(\Delta_1,\ldots,\Delta_d)$.
Finally, since $(X^n_{s_1},\ldots,X^n_{s_d})$ is a fixed lower-triangular linear transform of $(\Delta^n_1,\ldots,\Delta^n_d)$, the continuous mapping theorem yields
\[
(X^n_{s_1},\ldots,X^n_{s_d})\ \Rightarrow\ (X_{s_1},\ldots,X_{s_d}).
\]
This proves (a).

\medskip
\noindent\textbf{(b) Tightness.}
We apply Billingsley~\cite{b99}, Theorem~16.8 and its corollary. A sequence $\{X^n\}$ of $\mathbb{D}_\infty$-valued processes is tight if the following hold:
\begin{itemize}
\item[(i)] For each fixed $t\ge 0$,
\[
\lim_{a\to\infty}  \limsup_{n\to\infty}  \mathbb{P}\big(|X^n_t|\ge a\big)  =  0.
\]
\item[(ii)] For each $t>0$,
\[
\lim_{a\to\infty}  \limsup_{n\to\infty}  \mathbb{P}\!\left( \sup_{s\le t} |X^n_s - X^n_{s-}| \ge a \right)  =  0.
\]
\item[(iii)] For each $\epsilon,\eta,t>0$, there exist $\delta_0>0$ and $n_0\in\mathbb{N}$ such that for all $\delta\le \delta_0$, $n\ge n_0$, and any $(\mathcal{F}^n_u)_{u\ge 0}$-stopping time $\tau\le t$ taking values on the observation grid,
\[
\mathbb{P}\big( |X^n_{\tau+\delta} - X^n_\tau| \ge \epsilon \big)  \le  \eta.
\]
\end{itemize}
\  \vspace{3mm}
\noindent \underline{Proof of (i).} Fix $t>0$. For $n$ sufficiently large there is $k=k(n)$ such that $t\in (t_{k-1},t_k]$ and $t_{k-1}\uparrow t$, $t_k\downarrow t$ as $n\to\infty$. Using exchangeability of increments,
\[
X^n_t  \stackrel{d}{=}  u + \sum_{j=1}^{k-1} \big(\mu h_n + \sigma\, \Delta_j W + \Delta_j S\big)
 =  u + \mu t_{k-1} + \sigma W_{t_{k-1}} + S_{t_{k-1}}.
\]
Hence, for any $a > |u| + |\mu| t_{k-1}$,
\begin{align*}
\mathbb{P}\!\left( |X^n_t| \ge a \right) 
&\le \mathbb{P}\!\left( \big|\sigma W_{t_{k-1}}\big| \ge \frac{a - |u| - |\mu| t_{k-1}}{2} \right)
  + \mathbb{P}\!\left( \big|S_{t_{k-1}}\big| \ge \frac{a - |u| - |\mu| t_{k-1}}{2} \right).
\end{align*}
Since $W_{t_{k-1}} \Rightarrow W_t$ and $S_{t_{k-1}} \Rightarrow S_t$ (stochastic continuity), both terms are tight; hence
\[
\limsup_{n\to\infty} \mathbb{P}\!\left( |X^n_t| \ge a \right)
 \le  \mathbb{P}\!\left( |\sigma W_t| \ge \frac{a - |u|-|\mu|t}{2} \right)
     + \mathbb{P}\!\left( |S_t| \ge \frac{a - |u|-|\mu|t}{2} \right)
 \xrightarrow[a\to\infty]{}  0.
\]
Thus (i) holds.  \vspace{3mm}

\noindent \underline{Proof of (ii).}  Fix $t>0$. For $n$ large and $t \in (t_{k-1},t_k]$,
\[
\sup_{s\le t} |X^n_s - X^n_{s-}|  =  \max_{1\le j \le k-1} \Big| \mu h_n + \sigma\, \Delta_{i^n(j)} W + \Delta_{i^n(j)} S \Big|.
\]
For any $a>0$ and $\varepsilon \in (0,a)$, by the union bound and independence,
\begin{align*}
\mathbb{P}\!\left( \sup_{s\le t} |X^n_s - X^n_{s-}| \ge a \right)
&\le \mathbb{P}\!\left( \max_{1\le j\le k-1} |\mu| h_n \ge \frac{a}{3} \right)
 + \mathbb{P}\!\left( \max_{1\le j\le k-1} \big|\sigma\,\Delta_{i^n(j)} W\big| \ge \frac{a}{3} \right) \\
&\quad + \mathbb{P}\!\left( \max_{1\le j\le k-1} \big|\Delta_{i^n(j)} S\big| \ge \frac{a}{3} \right) \\
&\le \mathbf{1}_{\{ |\mu| h_n \ge a/3\}}
 + (k-1)\, \mathbb{P}\!\left( |\sigma\, \Delta W_1| \ge \frac{a}{3} \right) \\
& + \mathbb{P}\!\left( \sum_{j=1}^{k-1} |\Delta_{i^n(j)} S|\,\mathbf{1}_{\{|\Delta_{i^n(j)} S|>\varepsilon\}} \ge \frac{a}{3} \right).
\end{align*}
Here $\Delta W_1 \sim N(0,h_n)$, so for each fixed $n$, $(k-1)\,\mathbb{P}(|\sigma \Delta W_1|\ge a/3)\to 0$ as $a\to\infty$. For the jump part,
\[
\sum_{j=1}^{k-1} |\Delta_{i^n(j)} S|\,\mathbf{1}_{\{|\Delta_{i^n(j)} S|>\varepsilon\}}
 \stackrel{d}{=}  \sum_{u\le t_{k-1}} |\Delta S_u|\,\mathbf{1}_{\{|\Delta S_u|>\varepsilon\}},
\]
which is a compound Poisson process with rate $\lambda_\varepsilon := \int_{|z|>\varepsilon} \nu(\mathrm{d}z) < \infty$; hence
\[
\lim_{a\to\infty}  \mathbb{P}\!\left( \sum_{u\le t_{k-1}} |\Delta S_u|\,\mathbf{1}_{\{|\Delta S_u|>\varepsilon\}} \ge \frac{a}{3} \right)  =  0.
\]
Taking first $\limsup_{n\to\infty}$ and then $a\to\infty$ gives (ii). \vspace{3mm}

\noindent \underline{Proof of (iii).} Fix $\epsilon,\eta,t>0$. Let $\tau$ be a stopping time taking values on the grid and satisfying $\tau\le t$. It suffices to treat $\tau\equiv s$ for a deterministic $s\in[0,t]$, because the same bounds hold uniformly for grid-valued stopping times by the strong Markov property of increments. Choose $\delta>0$ small. There exist integers $p<q$ and $n_1$ such that for all $n\ge n_1$,
\[
t_{p-1} < s \le t_p < t_{q-1} < s + \delta/2 \le t_q < s+\delta, 
\]
and moreover $t_p \downarrow s$, $t_q \downarrow s+\delta/2$ as $n\to\infty$.

For $n\ge n_1$ we have
\[
|X^n_{s+\delta} - X^n_s|
 \le  |\mu|\,\delta  +  \Big|\sigma \sum_{j=p}^{q-1} \Delta_{i^n(j)} W\Big|
 +  \Big|\sum_{j=p}^{q-1} \Delta_{i^n(j)} S\Big|.
\]
Using exchangeability of increments and independent increments of $W$ and $S$,
\[
\sum_{j=p}^{q-1} \Delta_{i^n(j)} W  \stackrel{d}{=}  W_{t_q}-W_{t_p}  \sim  N\big(0,\, t_q-t_p\big), \quad t_q-t_p < \delta,
\]
and
\[
\sum_{j=p}^{q-1} \Delta_{i^n(j)} S 
 \stackrel{d}{=}  (S_{s+\delta}-S_s) - (S_{t_p}-S_s) - (S_{s+\delta}-S_{t_q}).
\]
Hence, for any $\epsilon>0$,
\begin{align*}
\mathbb{P}\!\left( |X^n_{s+\delta} - X^n_s| \ge \epsilon \right)
&\le \mathbf{1}_{\{ |\mu|\,\delta \ge \epsilon/5\}}
 + \mathbb{P}\!\left( |N(0,\sigma^2 \delta)| \ge \epsilon/5 \right)
 + \mathbb{P}\!\left( |S_{s+\delta}-S_s| \ge \epsilon/5 \right) \\
&\quad + \mathbb{P}\!\left( |S_{t_p}-S_s| \ge \epsilon/5 \right)
 + \mathbb{P}\!\left( |S_{s+\delta}-S_{t_q}| \ge \epsilon/5 \right).
\end{align*}
By choosing $\delta_1>0$ so that $|\mu|\,\delta_1 < \epsilon/5$ and
\[
\mathbb{P}\!\left( |N(0,\sigma^2 \delta)| \ge \epsilon/5 \right) < \eta/4 \quad \text{for all }\delta \le \delta_1,
\]
and using stochastic continuity of $S$, there exist $\delta_2>0$ and $n_2$ such that, for all $\delta \le \delta_2$ and $n\ge n_2$,
\begin{align*}
\mathbb{P}\!\left( |S_{s+\delta}-S_s| \ge \epsilon/5 \right) < \eta/4, \quad 
\mathbb{P}\!\left( |S_{s+\delta/2}-S_{t_q}| \ge \epsilon/10 \right) <  \eta/8, \quad
\mathbb{P}\!\left( |S_{t_p}-S_s| \ge \epsilon/5 \right) <  \eta/4.
\end{align*}
Moreover,
\[
\mathbb{P}\!\left( |S_{s+\delta}-S_{t_q}| \ge \epsilon/5 \right)
 \le  \mathbb{P}\!\left( |S_{s+\delta}-S_{s+\delta/2}| \ge \epsilon/10 \right)
      + \mathbb{P}\!\left( |S_{s+\delta/2}-S_{t_q}| \ge \epsilon/10 \right)
 <  \eta/4,
\]
for $\delta \le \delta_2$ and $n\ge n_2$. Setting
\[
\delta_0 := \min\{\delta_1,\delta_2\}, \quad n_0 := \max\{n_1,n_2\},
\]
we obtain, for all $\delta \le \delta_0$ and $n\ge n_0$,
\[
\mathbb{P}\!\left( |X^n_{s+\delta} - X^n_s| \ge \epsilon \right)  <  \eta.
\]
Since the bounds are uniform in $s\in[0,t]$ taken on the observation grid (and therefore uniform for grid-valued stopping times), condition (iii) follows. 

By (i)--(iii), the sequence $\{X^n\}$ is tight in $\mathbb{D}_\infty$. Together with (a), this yields $X^n \Rightarrow X$ in $\mathbb{D}_\infty$, proving the theorem.
\end{proof}
\subsection{Numerical experiments}

We now provide numerical experiments to illustrate the weak convergence of the quasi-processes when $h_n \to 0$ and $T_n=n h_n \to \infty$. 

Consider the following jump-diffusion process:
\begin{align}
X_t = u + \mu t + \sigma W_t - \sum_{i=1}^{N_t} \xi_i, \label{jd}
\end{align}
where $W$ is a Wiener process, $N$ is a Poisson process with intensity $\lambda>0$, and $\xi_i$ are i.i.d. exponential random variables with mean $m$.  
Figure~\ref{fig:path} shows 100 simulated paths of the process with parameters $(\mu,\sigma,\lambda,m,u)=(20,10,5,3,0)$.  
The blue line represents the observed sample path, while the gray lines represent alternative simulated scenarios. This allows us to visualize the distribution of the process. 

Figures~\ref{fig:quasi-10}--\ref{fig:quasi-100} display several quasi-paths constructed from the blue sample path on $[0,T_n]$, where $T_n=10, 50$, and $100$. For comparison, only the segment on $[0,10]$ is shown.  
In each figure, the left panel corresponds to quasi-paths with step size $h_n=1.0$, while the right panel corresponds to those with smaller step sizes, such as $h_n=0.1, 0.05,$ and $0.005$.  

In the simulation, we randomly selected
\[
A_n = \{ i_{m_j} \mid j=1,2,\dots,a_n \} \subset \Lambda_n,
\]
where $a_n=100$, that is, 100 quasi-paths were generated for each setting.  

From these results, we observe that as $T$ becomes larger and $h_n$ becomes smaller, the distribution of the quasi-paths (in gray) closely resembles the distribution of the true sample paths shown in Figure~\ref{fig:path}. This provides a visual confirmation of the weak convergence of the quasi-processes.  

\begin{figure}
\begin{center}
\includegraphics[height=8cm]{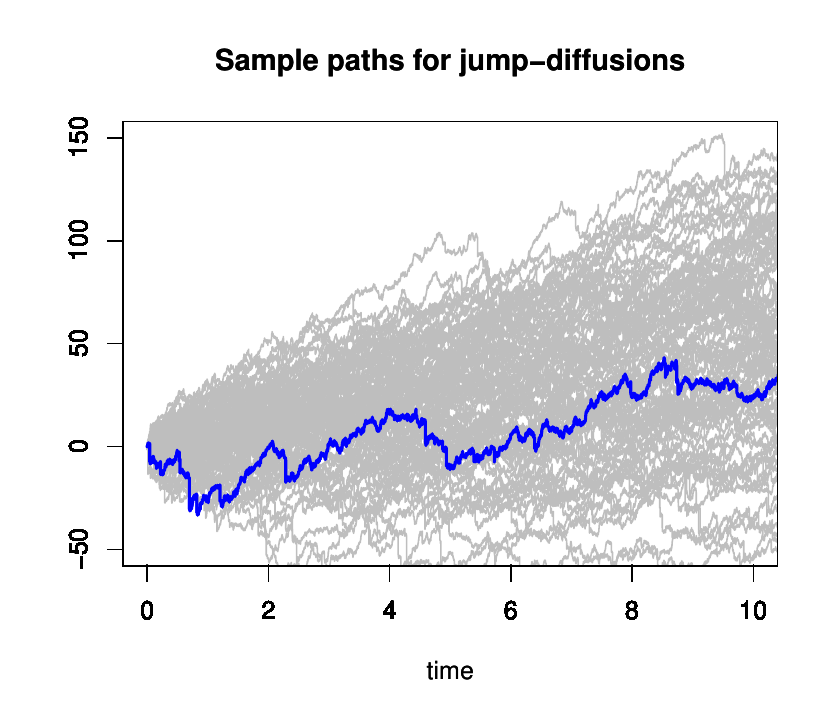}
\caption{Paths of the jump-diffusion process \eqref{jd} with $(\mu,\s,\la,m,u)=(20,10,5,3,0)$ . 
We assume that the blue line is an observed process, and the gray lines are other scenarios. }
\label{fig:path}
\end{center}
\end{figure}

\begin{figure}
\begin{center}
\includegraphics[height=9cm]{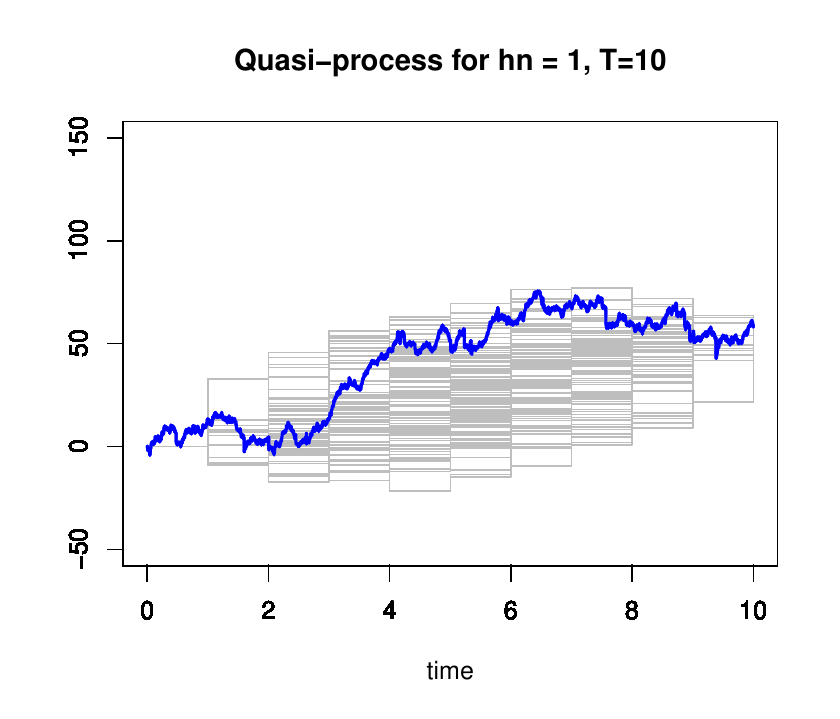}

\includegraphics[height=9cm]{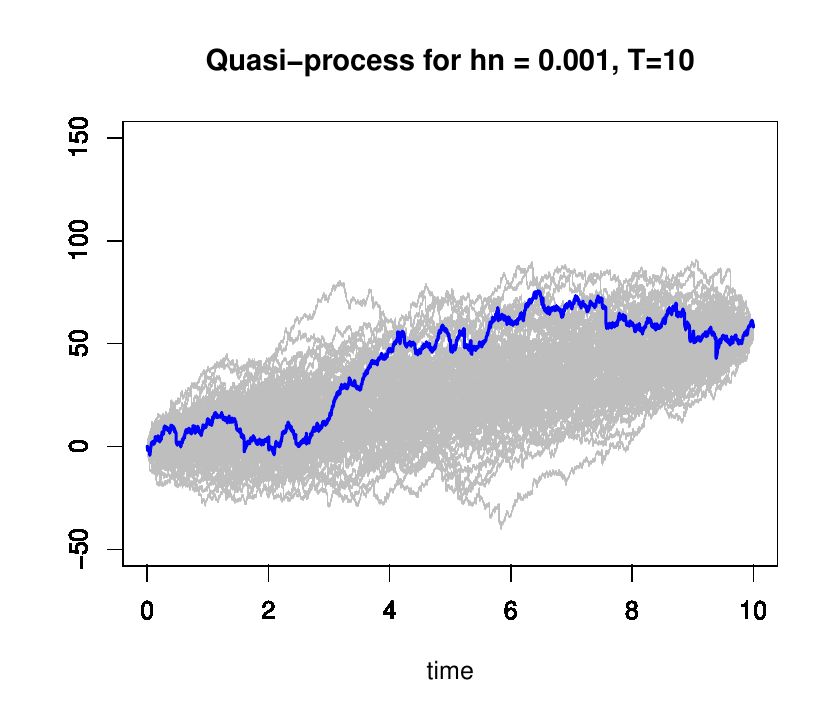}
\caption{Gray paths are $100(=a_n)$ quasi-processes based on the discrete samples from the blue observed process with $T_n=10$.  (top: $h_n=1.0$;\ bottom: $h_n=0.1$). }
\label{fig:quasi-10}
\end{center}
\end{figure}

\begin{figure}
\begin{center}
\includegraphics[height=9cm]{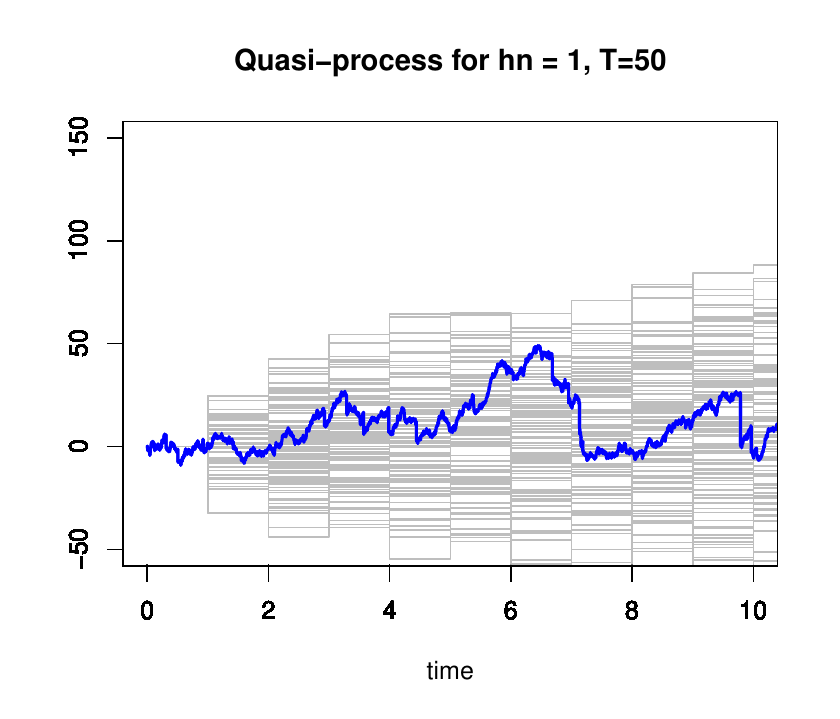}

\includegraphics[height=9cm]{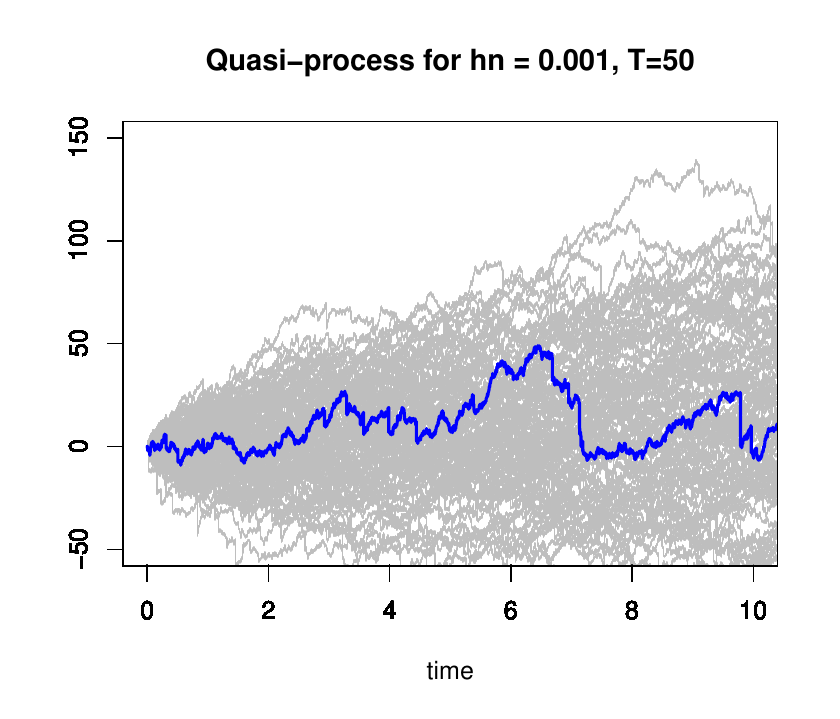}
\caption{Gray paths are $100(=a_n)$ quasi-processes based on the discrete samples from the blue observed process with $T_n=50$.  (top $h_n=1.0$;\ bottom: $h_n=0.05$). }
\label{fig:quasi-50}
\end{center}
\end{figure}

\begin{figure}
\begin{center}
\includegraphics[height=9cm]{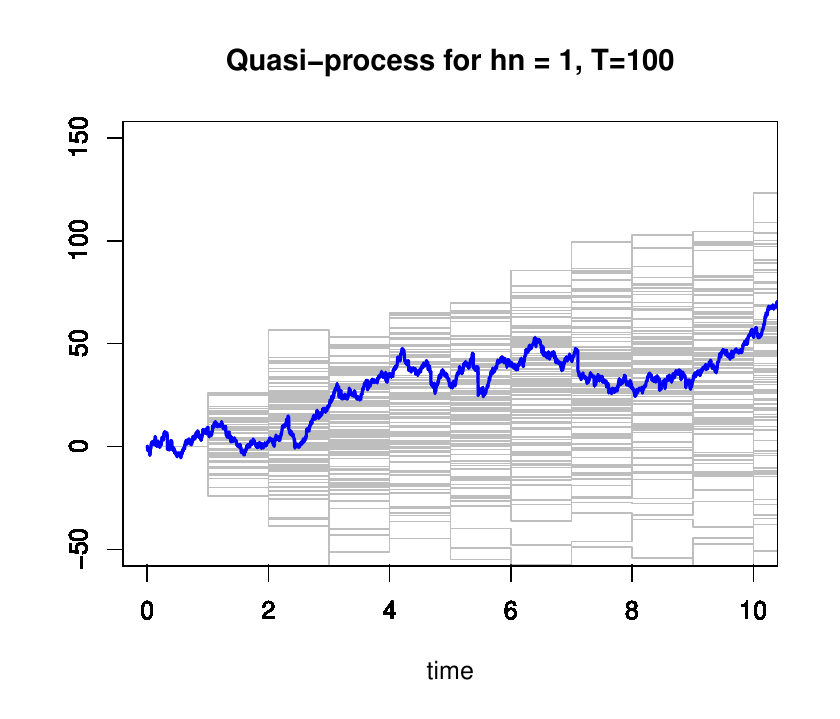}

\includegraphics[height=9cm]{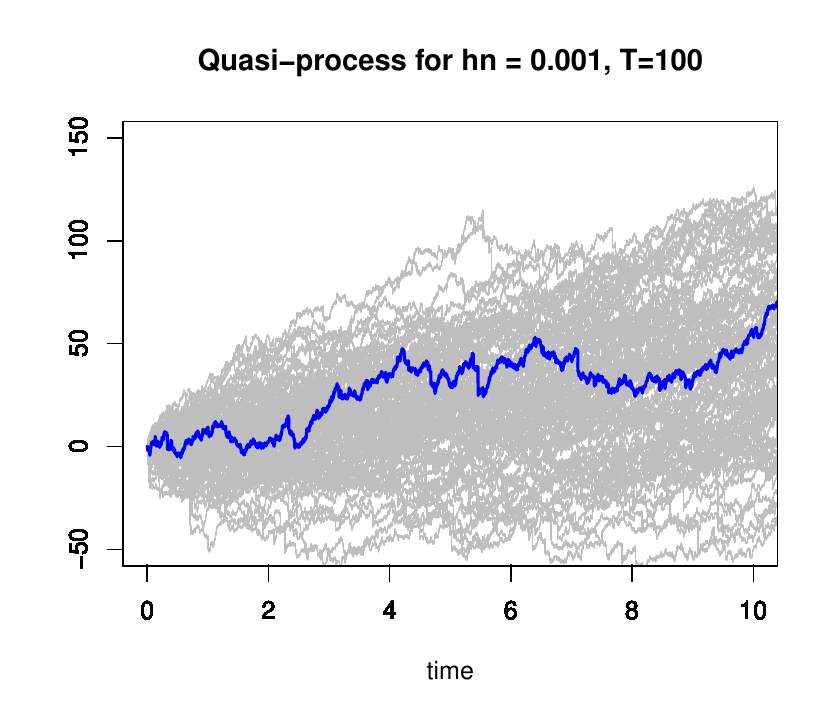}
\caption{Gray paths are $100(=a_n)$ quasi-processes based on the discrete samples from the blue observed process with $T_n=100$. (top: $h_n=1.0$;\ bottom: $h_n=0.005$). }
\label{fig:quasi-100}
\end{center}
\end{figure}

Next, let us we compare the marginal distribution of $X_1$ and that of the quasi-process, $\wh{X}_1^{i,n}$, under the same parameter values as in the previous experiments. 
Figures \ref{fig:marginal-10} -- \ref{fig:marginal-100} show histograms of 1000 values of $X_1$ from the true distribution and its estimated density (blue solid curve) with $T_n=10, 50$ and $100$. At the same time, estimated densities of $\wh{X}_1^{i,n}$ with $h_n=1$ (purple dotted curve), $h_n=0.01$ (green dashed curve), and $h_n=0.005$ (red solid curve). From those, we can observe the convergence of the marginal distribution when $h_n\to 0$ and $T_n\to \infty$. 

\begin{figure}
\begin{center}
\includegraphics[height=7.3cm]{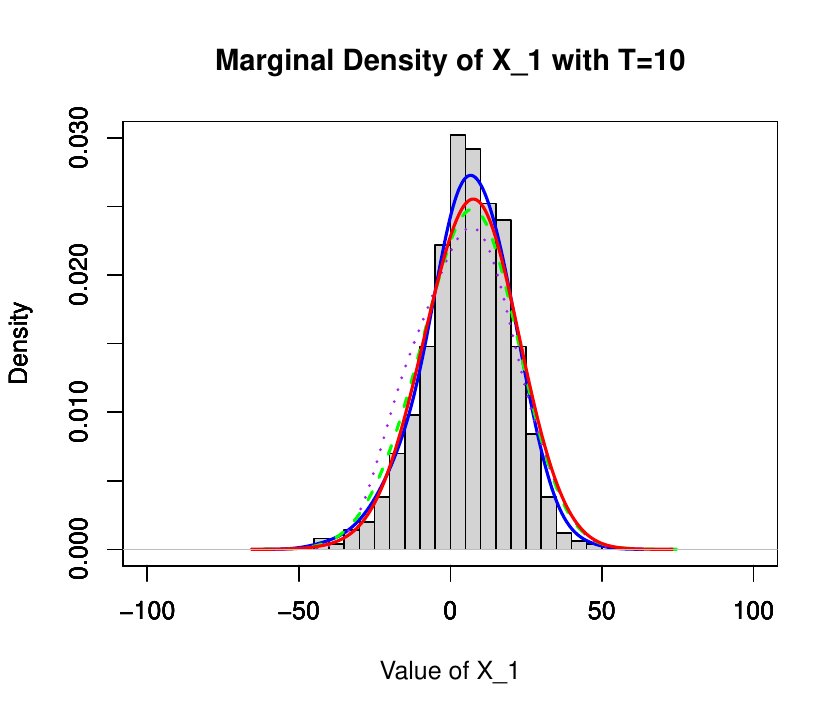}
\caption{The blue curve is the estimated density of $X_1$. 
The purple dotted curve, the green dashed curve and the red solid curve are estimated densities of quasi-processes when $T_n=10$ with $h_n=1,0.01$ and $0.005$, respectively.}
\label{fig:marginal-10}
\end{center}
\end{figure}

\begin{figure}
\begin{center}
\includegraphics[height=7.3cm]{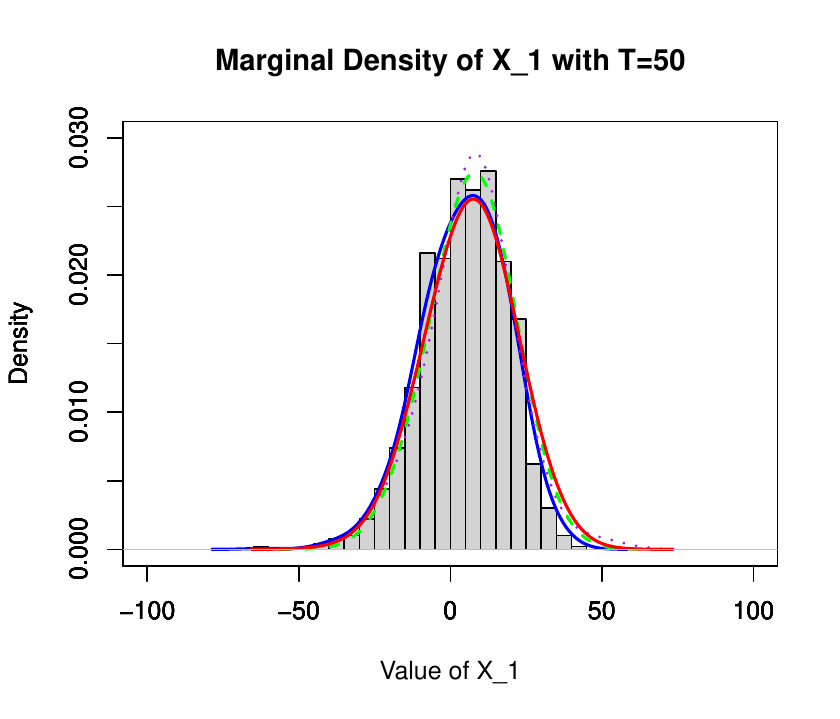}
\caption{The blue curve is the estimated density of $X_1$. 
The purple dotted curve, the green dashed curve and the red solid curve are estimated densities of quasi-processes when $T_n=50$ with $h_n=1,0.01$ and $0.005$, respectively.}
\label{fig:marginal-50}
\end{center}
\end{figure}

\begin{figure}
\begin{center}
\includegraphics[height=7.3cm]{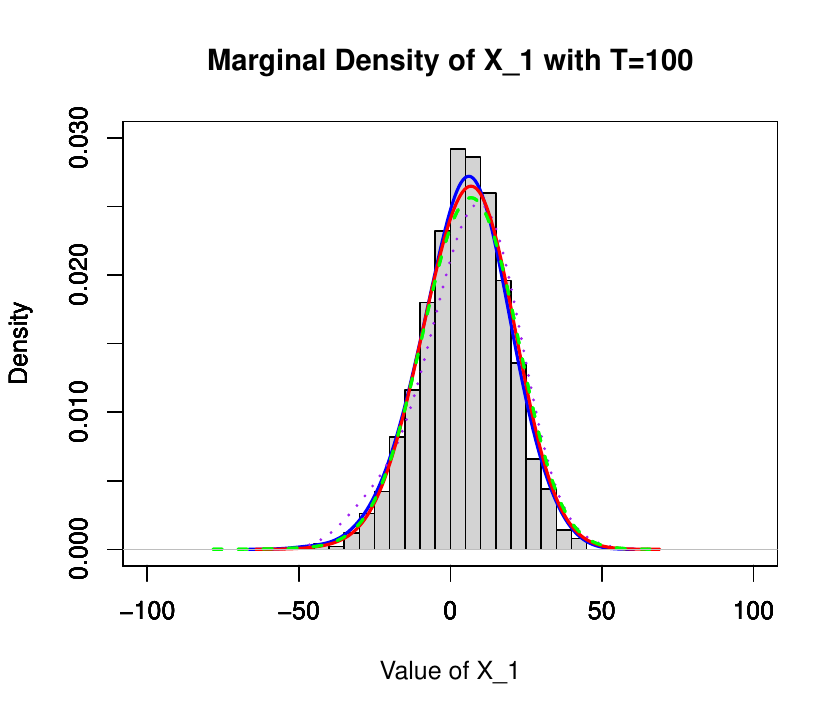}
\caption{The blue curve is the estimated density of $X_1$. 
The purple dotted curve, the green dashed curve and the red solid curve are estimated densities of quasi-processes when $T_n=100$ with $h_n=1,0.01$ and $0.005$, respectively.}
\label{fig:marginal-100}
\end{center}
\end{figure}

\section{Main results for $M$-estimation}\label{sec:main}
\subsection{Consistency results}\label{sec:consist}

We assume that the empirical measure $P_n := P_{A_n}$ in \eqref{Pn} is defined by a permutation set $A_n \subset \Lambda_n$ such that $\a_n := \# A_n \to \infty$ as $n \to \infty$.  
Furthermore, let ${\cal H}$ be a family of measurable functions $f:\DD_\infty \times \Theta \to \R$.  
We suppose that the estimator $\widehat{\theta}_n$ is defined by \eqref{M-est} with $h_\theta(x):=h(x,\theta) \in {\cal H}$.  

\begin{thm}\label{thm:consist}
Suppose that $\Theta$ is compact in $\R^d$ and that 
\begin{align}
{\cal H} \subset C^{0,1}_b(\DD_\infty \times \Theta), \label{c11b} 
\end{align}
that is, each $h \in {\cal H}$ is bounded and continuous in $(x,\theta)$ and Lipschitz continuous in $\theta$.  
Moreover, assume that for $h_\theta(x) \in {\cal H}$ there exists $\theta_0 \in \Theta$ such that for any $\varepsilon >0$, 
\begin{align}
\inf_{\theta \in \Theta:\,|\theta - \theta_0| > \varepsilon} P h_\theta \;>\; P h_{\theta_0}. \label{identify}
\end{align}
Then the estimator $\widehat{\theta}_n$ is weakly consistent for $\theta_0$: 
\[
\widehat{\theta}_n \toP \theta_0, \qquad n \to \infty. 
\]
\end{thm}

\begin{proof}
We show the uniform law of large numbers
\[
\sup_{\theta\in\Theta}\big|(P_n-P)h_\theta\big|\ \xrightarrow{\ \P\ }\ 0,
\]
and then conclude consistency by the identification condition \eqref{identify} and
Theorem~5.7 of van der Vaart~\cite{v98}.

Introduce the conditional expectation operator
\[
\widehat P_nh_\th:=\E\big[h_\th(\widehat X^{1,n})\,\big|\,X\big].
\]
For each $\theta\in\Theta$,
\[
(P_n-P)h_\theta=(P_n-\widehat P_n)h_\theta+(\widehat P_n-P)h_\theta.
\]
Hence, with
\[
I_n^{(1)}:=\sup_{\theta\in\Theta}\big|(P_n-\widehat P_n)h_\theta\big|,
\qquad
I_n^{(2)}:=\sup_{\theta\in\Theta}\big|(\widehat P_n-P)h_\theta\big|,
\]
we have
\begin{align}
\sup_{\theta\in\Theta}\big|(P_n-P)h_\theta\big|\ \le\ I_n^{(1)}+I_n^{(2)}.\label{I1-I2}
\end{align}

By assumption $\mathcal H\subset C^{0,1}_b(\mathbb D_\infty\times\Theta)$, there exists $L<\infty$ such that for all $x$ and $\theta,\theta'$,
\[
|h_\theta(x)-h_{\theta'}(x)|\le L\,|\theta-\theta'|.
\]
Conditionally on $X$, the random variables 
$\{h_\theta(\widehat X^{i,n})\}_{i\in A_n}$ are i.i.d.\ with
\[
\E\!\left[h_\theta(\widehat X^{i,n})\,\middle|\,X\right]
= \widehat P_n h_\theta.
\]
Hence, by the strong law of large numbers applied under the conditional
distribution given $X$,
\[
(P_n-\widehat P_n)h_\theta
=\frac{1}{a_n}\sum_{i\in A_n}
\Big(h_\theta(\widehat X^{i,n})
-\E[h_\theta(\widehat X^{i,n})\,|\,X]\Big)
\ \longrightarrow\ 0,
\]
almost surely under $\P(\,\cdot\,|X)$, for each fixed $\theta\in\Theta$. 
Moreover, for any $\theta,\theta'$,
\[
\big|(P_n-\widehat P_n)h_\theta-(P_n-\widehat P_n)h_{\theta'}\big|
\le 2L\,|\theta-\theta'|.
\]
Fix $\varepsilon>0$, take a finite $\delta$-net $\{\theta_k\}_{k=1}^N$ of $\Theta$
with $\delta=\varepsilon/(4L)$: 
\[
\forall\,\theta \in \Theta\ \ \exists\,k \in \{1,\dots,N\}
\quad \text{such that}\quad |\theta-\theta_k|\le \delta.
\]
Then
\[
\sup_{\theta\in\Theta}\big|(P_n-\widehat P_n)h_\theta\big|
\le \max_{1\le k\le N}\big|(P_n-\widehat P_n)h_{\theta_k}\big|+\varepsilon/2
\toP \varepsilon/2,
\]
since a finite maximum of terms converging (conditionally) to $0$ also converges to $0$
in probability. As $\varepsilon$ is arbitrary, $I_n^{(1)}\toP 0$.

Next, we shall show that $I_n^{(2)}\toP 0$. By Jensen and the tower property,
\[
I_n^{(2)}
=\sup_{\theta\in\Theta}\big|\E[h_\theta(\widehat X^{1,n})\mid X]-\E[h_\theta(X)]\big|
\le \E\!\Big[\sup_{\theta\in\Theta}|h_\theta(\widehat X^{1,n})-h_\theta(X)|\,\Big|\,X\Big].
\]
Taking expectations,
\[
\E[I_n^{(2)}]\ \le\ \E\Big[\sup_{\theta\in\Theta}|h_\theta(\widehat X^{1,n})-h_\theta(X)|\Big].
\]

Let $M:=\sup_{(x,\theta)}|h_\theta(x)|<\infty$ and fix $\varepsilon>0$.
Choose the same finite $\delta$-net $\{\theta_k\}_{k=1}^N$ of $\Theta$ with
$\delta=\varepsilon/(4L)$. For any $x,y$ and $\theta$,
\[
|h_\theta(x)-h_\theta(y)|
\le |h_{\theta_{k(\theta)}}(x)-h_{\theta_{k(\theta)}}(y)|+2L\delta,
\]
so, with $x=\widehat X^{1,n}$ and $y=X$,
\[
\sup_{\theta\in\Theta}|h_\theta(\widehat X^{1,n})-h_\theta(X)|
\le \max_{1\le k\le N}|h_{\theta_k}(\widehat X^{1,n})-h_{\theta_k}(X)|+\varepsilon/2.
\]
Hence
\[
\E\Big[\sup_{\theta\in\Theta}|h_\theta(\widehat X^{1,n})-h_\theta(X)|\Big]
\le \E\Big[\max_{1\le k\le N}|h_{\theta_k}(\widehat X^{1,n})-h_{\theta_k}(X)|\Big]
  + \varepsilon/2.
\]
By Theorem~\ref{thm:wc}, $\widehat X^{1,n}\Rightarrow X$ in $\mathbb D_\infty$.
By Skorokhod’s representation, we may couple so that
$\widehat X^{1,n}\to X$ a.s.\ in $\mathbb D_\infty$.
Each $h_{\theta_k}(\cdot)$ is bounded and continuous in $x$, thus
$|h_{\theta_k}(\widehat X^{1,n})-h_{\theta_k}(X)|\to 0$ a.s.\ and is dominated by $2M$.
Dominated convergence then gives, for each $k$,
\[
\E\big[|h_{\theta_k}(\widehat X^{1,n})-h_{\theta_k}(X)|\big]\to 0,
\]
and since $N<\infty$,
\[
\E\Big[\max_{1\le k\le N}|h_{\theta_k}(\widehat X^{1,n})-h_{\theta_k}(X)|\Big]\to 0.
\]
Letting $\varepsilon\downarrow 0$ yields
\[
\E\Big[\sup_{\theta\in\Theta}|h_\theta(\widehat X^{1,n})-h_\theta(X)|\Big]\to 0,
\]
so $I_n^{(2)}\xrightarrow{L^1} 0$ and hence $I_n^{(2)}\toP 0$.

As a consequence, we have that 
\[
\sup_{\theta\in\Theta}\big|(P_n-P)h_\theta\big|
\le I_n^{(1)}+I_n^{(2)}\ \xrightarrow{\ \P\ }\ 0.
\]
This completes the proof.  
\end{proof}

\subsection{Asymptotic normality}\label{sec:asym-normal}

For the class ${\cal H}$, we denote by $N_{[\,]}(\e,{\cal H}, L^r(P))$ the \emph{bracketing number} of ${\cal H}$, which is the minimum number of \emph{$\e$-brackets} required to cover ${\cal H}$.  
Here, an $\e$-bracket $[l,u]$ is a pair of measurable functions such that $l \le f \le u$ for $f \in {\cal H}$ and $\|u-l\|_{L^r(P)} < \e$.  

\begin{thm}\label{thm:asymp.normal}
Assume the same conditions as in Theorem~\ref{thm:consist}, and that the parameter space $\Theta\subset\R^d$ is a bounded open set with $C^1$ boundary. 
Suppose further that $h_\theta \in {\cal H}\cap C_b^{0,2}(\mathbb D_T\times\Theta)$, 
and that the matrix $V_{\theta_0}:=P\nabla_\theta^2 h_{\theta_0}$ is invertible. 
In addition, let there exist a constant $p>d$ and a sequence $\{r_n\}$ with 
$r_n\uparrow\infty$ and $r_n^2/\a_n = o(1)$ for $\a_n:=\#A_n$, such that, 
for $k=0,1$ and any $i^n\in A_n$, 
\begin{align}
 \big\|\nabla_\theta^k h_\theta(\widehat{X}^{i^n,n}) - 
 \nabla_\theta^k h_\theta(X)\big\|_{L^p(P)} = o(r_n^{-1}),
 \qquad n\to\infty. \label{rate}
\end{align}
Then the $M$-estimator $\widehat\theta_n$ is asymptotically normal:
\[
\sqrt{r_n}\,(\widehat\theta_n-\theta_0)\ \leadsto\ N(0,\Sigma),\qquad n\to\infty,
\]
with covariance matrix 
\[
\Sigma=V_{\theta_0}^{-1}\,P\!\big[\nabla_\theta h_{\theta_0}\,\nabla_\theta h_{\theta_0}^\top\big]\,V_{\theta_0}^{-1}.
\]
\end{thm}

\begin{remark}
The sequence $r_n$ in Theorem~\ref{thm:asymp.normal} is not unique.  
In general, if some $r'_n$ satisfies \eqref{rate}, then $r_n=\log r'_n\uparrow\infty$ 
also satisfies \eqref{rate}.  
Since the size $\a_n=\#A_n$ of the permutation set can be chosen arbitrarily, one can 
attain $r_n$-consistency by selecting $a_n$ sufficiently large relative to $r_n$.  
If $r_n=r_n^*$ satisfies \eqref{rate} and the property fails for every 
$\{r'_n\}$ with $r'_n/r_n^*\to\infty$, i.e.,
\[
\liminf_{n\to\infty} r'_n\,
\|\nabla_\theta^k h_\theta(\widehat X^{i^n,n})-\nabla_\theta^k h_\theta(X)\|_{L^p}>0,
\qquad k=0,1,
\]
then $r_n^*$ is optimal.  
Moreover, the example in Section~\ref{sec:ex1} (see Corollary~\ref{cor:ex-normal:final}) indicates 
that several admissible choices of $r_n$ are possible, and the optimal convergence rate 
can be attained by taking $\a_n$ large enough so that $r_n^2/\a_n\to 0$.
\end{remark}

\begin{remark}
The rate of convergence of $\widehat\theta_n$ can be adjusted by choosing the permutation 
set $A_n$ and thus $\a_n=\#A_n$ appropriately, subject to the restriction $r_n^2/\a_n=o(1)$.  
This order condition is sufficient but not necessary.  
Therefore, Theorem~\ref{thm:asymp.normal} does not imply that $\a_n$ must grow at a minimal 
rate; it only ensures that $\a_n$ is sufficiently large compared with $r_n^2$.  
While $\a_n$ may be chosen to grow faster, an excessively large $\a_n$ cannot improve the 
convergence rate beyond $\sqrt{r_n}$.
\end{remark}

\begin{proof}
Let $(X^{(b)})_{b=1,2,\dots,B}$ be independent copies of the process $X$ defined on the same probability space $(\Omega,\mathcal F,P)$, and denote their empirical measure by 
\[
P^*_B=\frac1B\sum_{b=1}^B\delta_{X^{(b)}}, 
\]
where $\delta_{X^{(b)}}$ denotes the Dirac measure at $X^{(b)}$.  
Let $\a_n:=\#A_n$ for $A_n\subset\Lambda_n$ corresponding to $P_n$ in \eqref{Pn}.  

Under condition \eqref{c11b}, the class ${\cal H}$ is $P$-Glivenko--Cantelli, that is,  
\[
\sup_{h\in{\cal H}}|(P^*_{a_n}-P)h|\to 0 \quad\text{a.s.},\quad n\to\infty.
\]
Since each function in ${\cal H}$ is Lipschitz in $\theta\in\Theta$, Example~19.7 of van der Vaart~\cite{v98}, shows that the bracketing number of ${\cal H}$ is finite. For every $\varepsilon>0$,  
\[
N_{[\,]}(\varepsilon,{\cal H},L^1(P))\ \lesssim\ \varepsilon^{-d}<\infty,
\]
which implies that ${\cal H}$ is $P$-Donsker. Hence,  
\[
\sup_{h\in{\cal H}}|(P^*_{\a_n}-P)h|=O_p(\a_n^{-1/2}),\qquad n\to\infty.
\]

From the decomposition \eqref{I1-I2} in the proof of Theorem~\ref{thm:consist},  
\[
\sup_{h\in{\cal H}}|(P_{\a_n}^*-P_n)h|
\ \le\ \sup_{h\in{\cal H}}|(P^*_{\a_n}-P)h|+I_n^{(1)}+I_n^{(2)}.
\]

\emph{Step 1 (bound on $I_n^{(2)}$).}
By assumption \eqref{rate} and the Sobolev-Morrey embedding $W^{1,p}(\Theta)\hookrightarrow C(\overline\Theta)$ valid for $p>d$: see, e.g., Evans~\cite{ev10}, Section~5.6, Theorem~4, we obtain the uniform control.  
Indeed, for each $n$ define
\[
f_n(\theta):=h_\theta(\widehat X^{i^n,n})-h_\theta(X).
\]
By the Sobolev inequality, for some $C=C(\Theta,p)$,
\[
\|f_n\|_\infty \ \le\ C\Big(\|f_n\|_{L^p(\Theta)}+\|\nabla_\theta f_n\|_{L^p(\Theta)}\Big).
\]
Taking expectations and using Minkowski and Fubini,
\begin{align*}
\E\big[\|f_n\|_\infty\big]
&\le C\sum_{k=0}^1\Bigg(\int_\Theta
\Big\|\nabla_\theta^k h_\theta(\widehat X^{i^n,n})
-\nabla_\theta^k h_\theta(X)\Big\|_{L^p(P)}^p\,d\theta\Bigg)^{1/p}.
\end{align*}
By \eqref{rate}, the right-hand side is $o(r_n^{-1})$. Hence, by Jensen’s inequality,
\[
I_n^{(2)}=\sup_\theta\big|\E[h_\theta(\widehat X^{i^n,n})]-\E[h_\theta(X)]\big|
\ \le\ \E\big[\|f_n\|_\infty\big]
=o(r_n^{-1}).
\]

\emph{Step 2 (bound on $I_n^{(1)}$).}
Recall
\[
I_n^{(1)}=\sup_{\theta\in\Theta}\big|(P_n-\widehat P_n)h_\theta\big|,
\qquad 
\widehat P_nh_\theta:=\E[h_\theta(\widehat X^{1,n})\,|\,X].
\]
Fix $\delta>0$ and take a finite $\delta$-net $\{\theta_k\}_{k=1}^{N(\delta)}$ of $\Theta$.
Let $L$ be the Lipschitz constant in $\theta$ for $h\in\mathcal H$ and 
$M:=\sup_{(x,\theta)}|h_\theta(x)|<\infty$. Then
\begin{equation}\label{eq:I1-net}
I_n^{(1)} \le \max_{1\le k\le N(\delta)} \big|(P_n-\widehat P_n)h_{\theta_k}\big| + 2L\delta.
\end{equation}

Fix $k$ and write, conditionally on $X$,
\[
Z_{i,k}:=h_{\theta_k}(\widehat X^{i,n})-\E\!\left[h_{\theta_k}(\widehat X^{i,n})\,\middle|\,X\right],
\qquad i\in A_n.
\]
Then $\{Z_{i,k}\}_{i\in A_n}$ are i.i.d.\ with $\E[Z_{i,k}\mid X]=0$ and
$|Z_{i,k}|\le 2M$ almost surely. Hence, by Hoeffding’s inequality,
for any $t>0$,
\begin{align}
\P\left(\left|(P_n-\widehat P_n)h_{\theta_k}\right|>t \,\middle|\, X\right)
&=\P\!\left(\left|\frac1{\a_n}\sum_{i\in A_n} Z_{i,k}\right|>t \,\middle|\, X\right)\notag\\
&\le 2\exp\!\left(-\frac{2\a_n^2 t^2}{\sum_{i\in A_n}(4M)^2}\right)
=2\exp\!\left(-\frac{\a_n t^2}{8M^2}\right).
\label{eq:hoeff}
\end{align}
By the union bound and \eqref{eq:hoeff}, for any $t>0$,
\begin{align}
\P\!\left(\max_{1\le k\le N(\delta)}
\left|(P_n-\widehat P_n)h_{\theta_k}\right|>t \,\middle|\, X\right)
&\le \sum_{k=1}^{N(\delta)}
\P\!\left(\left|(P_n-\widehat P_n)h_{\theta_k}\right|>t \,\middle|\, X\right)\notag\\
&\le 2\,N(\delta)\,\exp\!\left(-\frac{\a_n t^2}{8M^2}\right).
\label{eq:union}
\end{align}
Choose
\[
t_n:= C\,\sqrt{\frac{\log N(\delta)}{\a_n}},
\]
with any fixed constant $C>\sqrt{8}M$. Then \eqref{eq:union} gives
\[
\P\!\left(\max_{1\le k\le N(\delta)}
\left|(P_n-\widehat P_n)h_{\theta_k}\right|>t_n \,\middle|\, X\right)
\;\le\; 2\,\exp\!\left(\log N(\delta)-\frac{\a_n t_n^2}{8M^2}\right)
\;=\; 2\,e^{-(C^2/8M^2-1)\log N(\delta)}.
\]
Since $C^2/8M^2-1>0$, the right-hand side is bounded by a constant $<1$ that does not
depend on $n$ or $X$, and in particular tends to $0$ whenever $N(\delta)\to\infty$.
Therefore,
\begin{equation}\label{eq:max-net-Op}
\max_{1\le k\le N(\delta)} \big|(P_n-\widehat P_n)h_{\theta_k}\big|
= O_p\!\Big(\sqrt{\tfrac{\log N(\delta)}{\a_n}}\Big).
\end{equation}
Combining \eqref{eq:I1-net} and \eqref{eq:max-net-Op}, we obtain
\[
I_n^{(1)}=O_p\!\Big(\sqrt{\tfrac{\log(1/\delta)}{\a_n}}+\delta\Big).
\]
Choosing $\delta=\delta_n\downarrow 0$ so that $\sqrt{\log(1/\delta_n)/\a_n}=o(r_n^{-1})$ and $\delta_n=o(r_n^{-1})$, which is possible since $r_n^2/\a_n\to 0$, we conclude $I_n^{(1)}=o_p(r_n^{-1})$.

\emph{Step 3.}
Combining Steps~1--2 with the $P$-Donsker bound,
\[
\sup_{h\in{\cal H}}|(P_{\a_n}^*-P_n)h|=o(r_n^{-1}).
\]
Proceeding as in the displayed inequalities after \eqref{I1-I2}, we deduce
\[
P_{\a_n}^*h_{\widehat\theta_n}
\ \le\ P_{\a_n}^*h_{\theta_0}+o_p(r_n^{-1}).
\]
Moreover, the same reasoning yields condition 
\begin{align}
P_{\a_n}^*h_{\widehat\theta_n}
\ \le\ \inf_{\theta\in\Theta}P_{\a_n}^*h_\theta+o_p(r_n^{-1}),\label{P*-condition}
\end{align}
so that $\widehat\theta_n$ is an approximate minimizer of the contrast $P_{\a_n}^*h_\theta$.

\emph{Step 4.}
By Corollary~5.53 of van der Vaart~\cite{v98}, $\sqrt{r_n}(\widehat\theta_n-\theta_0)$ is bounded in probability. A Taylor expansion and Theorem~5.23 of van der Vaart~\cite{v98}, applied with the condition \eqref{P*-condition}, then yield
\[
\sqrt{r_n}(\widehat\theta_n-\theta_0)
=V_{\theta_0}^{-1}\,\frac{1}{\sqrt{\a_n}}\sum_{i\in A_n}\nabla_\theta h_{\theta_0}(X^{(i)})+o_p(1).
\]
Finally, since the $X^{(i)}$ are i.i.d., the central limit theorem gives
\[
\frac{1}{\sqrt{a_n}}\sum_{i\in A_n}\nabla_\theta h_{\theta_0}(X^{(i)})
\ \leadsto\ N\!\left(0,\,P[\nabla_\theta h_{\theta_0}\,\nabla_\theta h_{\theta_0}^\top]\right).
\]
This proves the theorem.
\end{proof}

\section{Examples}\label{sec:example}

\subsection{Default-related discounted losses}\label{sec:ex1}

In this section, we present a concrete example of $h_\theta$ and investigate sufficient conditions that guarantee the assumptions of Theorems~\ref{thm:consist} and~\ref{thm:asymp.normal}.  

Suppose that the L\'evy process $X=(X_t)_{t\in [0,T_n]}$, with characteristic function given in \eqref{ch}, represents the \emph{loss process} of a company.  
We define the time of default by  
\[
\tau^X := \inf\{t\in [0,T_n] : X_t < \xi\}\wedge T,
\]
for constants $\xi\in \R$ and $T\in (0,\infty]$.  
The random variable $\tau^X$ can be interpreted as the \emph{time of default} in a financial context, or as the \emph{time of ruin} in an insurance context, corresponding to the default threshold $\xi$.  

We then consider the \emph{discounted loss up to default}. For $x=(x_t)_{t \in [0,T_n]}\in \DD_T$, define  
\begin{align}
h_\theta(x) = h(x,\theta) = \int_0^{\tau^x} e^{-r t}\, U_\theta(t,x_t)\,\df t,  \label{dl}
\end{align}
where $r>0$ is a fixed discount rate, and $U_\theta:\R_+\times \R\to \R$ is a bounded function depending on the parameter $\theta\in \Theta \subset \R^d$.  

The expectation of this functional yields the \emph{expected discounted loss up to default}, given by  
\begin{align}
Ph_\theta = \E\l[\int_0^{\tau^X} e^{-r t}\, U_\theta(t,X_t)\,\df t\r].    \label{edl}
\end{align}
See Feng and Shimizu~\cite{fs13} for further details on this quantity.  

In what follows, we work on the Skorokhod space $(\DD_T,\varrho_T)$, where $\varrho_T$ denotes the Skorokhod metric introduced in Section~\ref{sec:intro}.

\begin{lemma}\label{lem:tau:reg}
Let $y\in\DD_T$ satisfy $\tau^y\in(0,T)$ and 
\[
  y_s>\xi \quad \text{for all } s<\tau^y.
\]
Then $\tau^\cdot$ is continuous at $y$ with respect to the Skorokhod metric $\varrho_T$: for every $\varepsilon\in(0,\min\{\tau^y,T-\tau^y\})$ there exists $\delta>0$ such that
\[
  \varrho_T(x,y)<\delta \quad\Rightarrow\quad |\tau^x-\tau^y|<\varepsilon.
\]
\end{lemma}

\begin{proof}
Fix $\varepsilon\in(0,\min\{\tau^y,T-\tau^y\})$ and write $\tau=\tau^y$. Since $y_s>\xi$ on $[0,\tau)$, compactness of $[0,\tau-\varepsilon]$ gives
\[
  \alpha:=\inf_{0\le s\le \tau-\varepsilon}(y_s-\xi)>0.
\]
The mapping $F(x):=\inf_{[0,\tau-\varepsilon]}x_s$ is continuous on $(\DD_T,\varrho_T)$. Hence for $\varrho_T(x,y)$ small enough we have $\inf_{[0,\tau-\varepsilon]}x_s>\xi$, which implies $\tau^x\ge \tau-\varepsilon$.

On the other hand, by definition of $\tau$, 
\[
  \inf_{\tau\le s\le \tau+\varepsilon}y_s<\xi.
\]
The functional $x\mapsto\inf_{[\tau,\tau+\varepsilon]}x_s$ is continuous, so for $\varrho_T(x,y)$ small we obtain $\inf_{[\tau,\tau+\varepsilon]}x_s<\xi$, implying $\tau^x\le \tau+\varepsilon$. Combining both bounds yields $|\tau^x-\tau^y|<\varepsilon$.
\end{proof}

\begin{defn}\label{def:regset}
Define the set of regular paths by
\[
  \DD_T^{\mathrm{reg}} := \big\{\, x\in\DD_T \;:\; x_0>\xi,\ \tau^x\in(0,T) \,\big\}.
\]
\end{defn}

\begin{lemma}\label{lem:h}
Let $N\ge 2$, $T\in(0,\infty]$, and let $U_\theta:\R_+\times\R\to\R$ be bounded in $(t,z)$
with parameter $\theta\in\Theta\subset\R^d$. Assume for some $C,L>0$ that
\begin{align}
\sup_{(t,z)\in\R_+\times\R,\ \theta\in\Theta}\big|\nabla_\theta^k U_\theta(t,z)\big| \le C,
\qquad k=1,\dots,N, \label{U-bound}
\end{align}
and the following Lipschitz continuity in the path variable holds:
\begin{align}
\sup_{t\in\R_+,\ \theta\in\Theta}
\big| \nabla_\theta^k U_\theta(t,x_t) - \nabla_\theta^k U_\theta(t,y_t) \big|
\;\le\; L\,|x-y|_T, 
\qquad x,y\in\DD_T,\quad k=1,\dots,N, \label{U-lipschitz}
\end{align}
where $|x-y|_T := \sup_{t\in[0,T]}|x_t-y_t|$.
For fixed $\theta\in\Theta$, define
\[
  h(x,\theta) := \int_0^{\tau^x} e^{-rt}\,U_\theta(t,x_t)\,\df t.
\]
Then for every $y\in\DD_T^{\mathrm{reg}}$, the map $x\mapsto h(x,\theta)$ is continuous at $y$ under $\varrho_T$.
Moreover, for $k\le N-1$, the map $x\mapsto \nabla_\theta^k h(x,\theta)$ is also continuous at $y$.
\end{lemma}

\begin{proof}
Fix $\theta\in\Theta$ and abbreviate $h(x):=h(x,\theta)$. For any increasing surjection $\lambda:[0,T]\to[0,T]$,
\[
  |h(x)-h(y)|
  \le C\,|\tau^x-\tau^y|
     + \int_0^T e^{-r\lambda(t)}\,\big|U_\theta(\lambda(t),x_{\lambda(t)})-U_\theta(\lambda(t),y_{\lambda(t)})\big|\,\lambda'(t)\,\df t.
\]
Add and subtract $U_\theta(\lambda(t),y_t)$ inside the integrand and use \eqref{U-lipschitz} to get
\[
  |h(x)-h(y)|
  \;\lesssim\; |\tau^x-\tau^y| \;+\; \max\big\{|x\circ\lambda - y|_T,\ |\lambda-I|_T\big\}.
\]
Taking the infimum over $\lambda\in\Lambda_T$ and recalling the definition of $\varrho_T$,
\[
  |h(x)-h(y)| \;\lesssim\; |\tau^x-\tau^y| + \varrho_T(x,y).
\]
By Lemma~\ref{lem:tau:reg}, $|\tau^x-\tau^y|\to0$ whenever $\varrho_T(x,y)\to0$ and $y\in\DD_T^{\mathrm{reg}}$,
hence $h(x)\to h(y)$. The same argument applies with $U_\theta$ replaced by $\nabla_\theta^k U_\theta$
(using \eqref{U-bound}--\eqref{U-lipschitz}), which yields continuity of $x\mapsto \nabla_\theta^k h(x,\theta)$
for $k\le N-1$.
\end{proof}

\begin{cor}\label{cor:h-class}
Under the assumptions of Lemma~\ref{lem:h}, we have
\[
  h \in C_b^{0,k}\big(\DD_T^{\mathrm{reg}}\times\Theta\big), \qquad k=1,\dots,N-1.
\]
\end{cor}

\begin{prop}\label{prop:Ph}
Let $X^n\Rightarrow X$ in $(\DD_T,\varrho_T)$ and suppose $\P(X\in\DD_T^{\mathrm{reg}})=1$.
Assume $U_\theta$ satisfies \eqref{U-bound}--\eqref{U-lipschitz} and is bounded. Then, for every fixed $\theta\in\Theta$,
\[
  \lim_{n\to\infty}\E\!\left[\int_0^{\tau^{X^n}} e^{-rt}\,U_\theta(t,X^n_t)\,\df t\right]
  \;=\; \E\!\left[\int_0^{\tau^{X}} e^{-rt}\,U_\theta(t,X_t)\,\df t\right]
  \;=:\; P h_\theta.
\]
\end{prop}

\begin{proof}
By Lemma~\ref{lem:h} and $\P(X\in\DD_T^{\mathrm{reg}})=1$, the map $x\mapsto h(x,\theta)$ is bounded and continuous
at $P_X$-almost every $x$. Hence, by the continuous mapping theorem and bounded convergence,
$\int h\,dP_{X^n}\to \int h\,dP_X$, which equals the stated expectation.
\end{proof}
\begin{lemma}\label{lem:smooth}
Let $\psi_\delta:\R\to[0,1]$ be a smooth nondecreasing function with
$\psi_\delta(u)=0$ for $u\le 0$ and $\psi_\delta(u)=1$ for $u\ge\delta$.
For $x\in\DD_T$ and $t\in[0,T]$ set $F(x;0,t):=\inf_{0\le s\le t}x_s$ and define
\[
  \tilde h_\delta(x,\theta):=\int_0^T e^{-rt}\,U_\theta(t,x_t)\,\bigl(1-\psi_\delta\bigl(\xi-F(x;0,t)\bigr)\bigr)\,\df t.
\]
Under \eqref{U-bound}--\eqref{U-lipschitz}, for each $\delta>0$ and $k\le N-1$,
\[
  \tilde h_\delta\in C_b^{0,k}(\DD_T\times\Theta).
\]
Moreover, for every $x\in\DD_T^{\mathrm{reg}}$,
\[
  \tilde h_\delta(x,\theta)\to h(x,\theta)\qquad(\delta\downarrow0).
\]
If $U_\theta$ is bounded, then by bounded convergence
\[
  \E\big[\tilde h_\delta(X,\theta)\big]\to \E\big[h(X,\theta)\big]\qquad(\delta\downarrow0).
\]
\end{lemma}

\begin{proof}
Fix $\delta>0$ and $\theta\in\Theta$. Let $K_T:=\int_0^T e^{-rt}\,\df t<\infty$.
Boundedness follows from $|\tilde h_\delta(x,\theta)|\le \|U_\theta\|_\infty K_T$.

For continuity in $x$ under $\varrho_T$, take any $\lambda\in\Lambda_T$ and write
\begin{align*}
&\big|\tilde h_\delta(x,\theta)-\tilde h_\delta(y,\theta)\big|\\
&\le \int_0^T e^{-r\lambda(t)}\Big|U_\theta(\lambda(t),x_{\lambda(t)})\big(1-\psi_\delta(\xi-F(x;0,\lambda(t)))\big)\\
&\hspace{60mm}-U_\theta(\lambda(t),y_{\lambda(t)})\big(1-\psi_\delta(\xi-F(y;0,\lambda(t)))\big)\Big|\,\lambda'(t)\,\df t\\
&\le \|1-\psi_\delta\|_\infty\!\int_0^T e^{-r\lambda(t)}\big|U_\theta(\lambda(t),x_{\lambda(t)})-U_\theta(\lambda(t),y_{\lambda(t)})\big|\,\lambda'(t)\,\df t\\
&\quad+\|U_\theta\|_\infty\!\int_0^T e^{-r\lambda(t)}\big|\psi_\delta(\xi-F(x;0,\lambda(t)))-\psi_\delta(\xi-F(y;0,\lambda(t)))\big|\,\lambda'(t)\,\df t.
\end{align*}
By \eqref{U-lipschitz} the first integral is $\lesssim |x\circ\lambda-y|_T$.
Since $\psi_\delta$ is Lipschitz with constant $L_\delta:=\|\psi'_\delta\|_\infty<\infty$ and
$F(\cdot;0,t)$ is continuous on $(\DD_T,\varrho_T)$ for each fixed $t$, the second integral is
$\lesssim L_\delta\,\sup_{t\in[0,T]}\big|F(x;0,\lambda(t))-F(y;0,\lambda(t))\big| \lesssim L_\delta\,\max\{|x\circ\lambda-y|_T,|\lambda-I|_T\}$.
Hence
\[
\big|\tilde h_\delta(x,\theta)-\tilde h_\delta(y,\theta)\big|\;\lesssim_\delta\; \max\{|x\circ\lambda-y|_T,|\lambda-I|_T\}.
\]
Taking the infimum over $\lambda\in\Lambda_T$ yields continuity w.r.t. $\varrho_T$.
For $k\le N-1$,
\[
\nabla_\theta^k\tilde h_\delta(x,\theta)=\int_0^T e^{-rt}\,\nabla_\theta^kU_\theta(t,x_t)\,\big(1-\psi_\delta(\xi-F(x;0,t))\big)\,\df t,
\]
and the same bounds with \eqref{U-bound}--\eqref{U-lipschitz} give boundedness and continuity in $x$.

Finally, for $x\in\DD_T^{\mathrm{reg}}$, we have $\mathbf{1}_{\{t<\tau^x\}}=\mathbf{1}_{\{F(x;0,t)\ge \xi\}}$ and
$1-\psi_\delta(\xi-F(x;0,t))\to \mathbf{1}_{\{F(x;0,t)\ge \xi\}}$ for all $t$, with domination by $\|U_\theta\|_\infty e^{-rt}$.
Dominated convergence gives $\tilde h_\delta(x,\theta)\to h(x,\theta)$, and the expectation convergence follows similarly.
\end{proof}

\begin{remark}
If one keeps $h_\delta(x,\theta):=\int_0^T e^{-rt}U_\theta(t,x_t)\psi_\delta(\xi-x_t)\,\df t$, then 
\[
h_\delta(x,\theta)\to \int_0^T e^{-rt}U_\theta(t,x_t)\mathbf{1}_{\{x_t<\xi\}}\,\df t, 
\]
i.e., a post-hitting integral, which in general is not equal to $h(x,\theta)=\int_0^{\tau^x}e^{-rt}U_\theta(t,x_t)\,\df t$.
\end{remark}

\begin{remark}
The functional $h(x,\theta)=\int_0^{\tau^x} e^{-rt}U_\theta(t,x_t)\,\df t$ involves the hitting time $\tau^x$, which is not continuous on the whole Skorokhod space $(\DD_T,\varrho_T)$. Lemma~\ref{lem:tau:reg} ensures continuity of $\tau^\cdot$ only at regular paths, i.e., paths that stay strictly above the barrier before the hitting time and cross it within the interval $(0,T)$. Consequently, $h$ is continuous only at such regular paths. 

For probabilistic arguments (e.g., applying the continuous mapping theorem or interchanging limits and expectations) it is often convenient to work with globally continuous functionals. Lemma~\ref{lem:smooth} provides a smooth surrogate $h_\delta\in C_b^{0,k}(\DD_T\times\Theta)$ for each $\delta>0$. This surrogate coincides with $h$ in the limit $\delta\downarrow0$ at every regular path, and by bounded convergence the expectations also converge. Hence, $h_\delta$ serves as a technical device to justify the use of standard limit theorems while keeping the original functional $h$ as the object of interest.
\end{remark}

The following lemma provides an approximation bound which is crucial for establishing the asymptotic normality of the estimator. 
\begin{lemma}\label{lem:nabla-h}
Suppose we use the same assumptions as in Lemma \ref{lem:h}, and that there exist constants $p>d=\dim(\Theta)$ and $q>0$ as well as a sequence $\{\g_n\}$ with $\g_n\uparrow \infty$ such that
\begin{align}
\sup_{i \in A_n}\big| \widehat{X}_t^{i,n} - X_t \big|_{L^p} \le t^q \cdot o(\g_n^{-1}),\quad n\to\infty. \label{X-order}
\end{align}
For each $t>0$, the $o(\g_n^{-1})$-term is independent of $t$.
Then it follows for each $i^n\in A_n$ and any $p>d$ that 
\[
\big\| \nabla_\theta^k h_\theta(\widehat{X}^{i^n,n}) - \nabla_\theta^k h_\theta(X)\big\|_{L^p}  = o(\g_n^{-1}),\quad k\le N-1.
\]
\end{lemma}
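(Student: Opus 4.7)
The plan is to use the integral representation $\nabla_\theta^k h_\theta(x) = \int_0^{\tau^x} e^{-rt}\nabla_\theta^k U_\theta(t,x_t)\,dt$ (valid for $k\le N-1$ by the proof of Lemma~\ref{lem:h}), and to split the difference $\nabla_\theta^k h_\theta(\wh X^{i^n,n}) - \nabla_\theta^k h_\theta(X)$ into an ``integrand'' piece controlled by \eqref{X-order} and a ``domain'' piece controlled by the mismatch of the two hitting times. Writing $\tau_1 := \tau^X$ and $\tau_2 := \tau^{\wh X^{i^n,n}}$, I would decompose
\[
\nabla_\theta^k h_\theta(\wh X^{i^n,n}) - \nabla_\theta^k h_\theta(X) = A_n + B_n,
\]
with $A_n := \int_0^\infty e^{-rt}\bigl[\nabla_\theta^k U_\theta(t,\wh X^{i^n,n}_t) - \nabla_\theta^k U_\theta(t,X_t)\bigr]\I_{\{t<\tau_2\}}\,dt$ and $B_n := \int_0^\infty e^{-rt}\nabla_\theta^k U_\theta(t,X_t)\bigl[\I_{\{t<\tau_2\}}-\I_{\{t<\tau_1\}}\bigr]\,dt$.

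The term $A_n$ is routine. The pointwise Lipschitz bound from \eqref{U-lipschitz} gives $|A_n| \le L\int_0^\infty e^{-rt}|\wh X^{i^n,n}_t - X_t|\,dt$, and Minkowski's integral inequality lets me pull the $L^p$-norm inside the time integral:
\[
\|A_n\|_{L^p} \le L\int_0^\infty e^{-rt}\|\wh X^{i^n,n}_t - X_t\|_{L^p}\,dt \le L\cdot o(\gamma_n^{-1})\int_0^\infty e^{-rt}t^q\,dt = o(\gamma_n^{-1}),
\]
since $\int_0^\infty e^{-rt}t^q\,dt = \Gamma(q+1)/r^{q+1} < \infty$ and the $o(\gamma_n^{-1})$ factor in \eqref{X-order} is $t$-free.

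The main obstacle is $B_n$. Using \eqref{U-bound} and integrating pathwise gives $|B_n| \le C\int_{\tau_1\wedge\tau_2}^{\tau_1\vee\tau_2}e^{-rt}\,dt \le C|\tau_2 - \tau_1|$, so the task reduces to showing $\|\tau_2-\tau_1\|_{L^p} = o(\gamma_n^{-1})$. Qualitative convergence $\tau_2 \to \tau_1$ in probability follows from Lemma~\ref{lem:tau} together with Theorem~\ref{thm:wc}, but upgrading this to a quantitative $L^p$-rate is the delicate step. My plan is to exploit the pointwise-in-$t$ rate \eqref{X-order}: the event $\{|\tau_2-\tau_1|>\delta\}$ forces one of the two paths to stay on the ``wrong'' side of the barrier $\xi$ over an interval of length $\delta$ near the other's first-passage, and a Chebyshev-type bound on the probability that $|\wh X^{i^n,n}_t - X_t|$ is large at a well-chosen $t$, combined with the stochastic continuity of $X$ at $\tau_1$ and a dyadic splitting over $\delta$, should yield $\|\tau_2-\tau_1\|_{L^p} = o(\gamma_n^{-1})$. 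This last step is where any implicit regularity of $X$ at the barrier $\xi$ (e.g.\ transversal crossing) is consumed.
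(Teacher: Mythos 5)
Your $A_n$ bound (Minkowski's integral inequality, the Lipschitz bound \eqref{U-lipschitz}, and then \eqref{X-order}) is exactly the paper's argument. The paper's proof consists of this step alone: it writes
\[
\Big\| \n_\th^k h_\th(\wh{X}^{i^n,n}) - \n_\th^k h_\th(X)\Big\|_{L^p}
\le \int_0^\infty e^{-r t} \Big\| \n_\th^k U_\th(t,\wh{X}_t^{i^n,n}) - \n_\th^k U_\th(t,X_t) \Big\|_{L^p}\,\df t,
\]
which, as you correctly noticed, is only an inequality if the two integrals defining $\n_\th^k h_\th$ have the \emph{same} upper limit. The paper silently suppresses the mismatch between $\tau^{\wh X^{i^n,n}}$ and $\tau^X$, i.e.\ it omits your $B_n$ term entirely. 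So on the domain-mismatch issue you are more careful than the source, and you should not be penalised for it.

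That said, your treatment of $B_n$ is a plan, not a proof, and the plan is the genuinely hard part. The bound $|B_n|\le C\,|\tau_2-\tau_1|$ is fine, but the hypothesis \eqref{X-order} gives $L^p$ control of $\wh X_t^{i^n,n}-X_t$ only at \emph{fixed} times $t$, whereas a quantitative estimate of $\|\tau_2-\tau_1\|_{L^p}$ is a first-passage-time stability statement; the two are not linked by Chebyshev and a dyadic split alone. The passage ``$\{|\tau_2-\tau_1|>\delta\}$ forces one path to stay on the wrong side of $\xi$ over an interval of length $\delta$'' needs a lower bound on how fast $X$ escapes the level $\xi$ after $\tau_1$ (equivalently, a non-degeneracy/transversality condition at the barrier, or density bounds for $X_{\tau_1}$ near $\xi$), and that is not among the stated hypotheses. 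Qualitatively, $\tau_2\to\tau_1$ follows from Lemma~\ref{lem:tau} and Theorem~\ref{thm:wc}, but upgrading to $o(\gamma_n^{-1})$ in $L^p$ is an open step in your sketch; the paper avoids it only by the unjustified inequality above. In short: your decomposition is the right one, your $A_n$ reproduces the paper, and you correctly located a gap that the paper itself does not fill; but the $B_n$ estimate you need remains unproved and would require extra assumptions on the behaviour of $X$ at the barrier $\xi$ to carry through.
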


\begin{proof}
For any integer $k\le N-1$, $\nabla_\theta^k U_\theta$ is Lipschitz continuous uniformly in $x$ under the assumptions in Lemma \ref{lem:h}. 
Hence, for any $p\ge 1$, 
\begin{align*}
\Big\| \nabla_\theta^k h_\theta(\widehat{X}^{i^n,n}) - \nabla_\theta^k h_\theta(X)\Big\|_{L^p} 
&= \Bigg\|\int_0^\infty e^{-rt}\big[\nabla_\theta^k U_\theta(t,\widehat{X}_t^{i^n,n}) - \nabla_\theta^k U_\theta(t,X_t)\big]\,\df t \Bigg\|_{L^p}\\
&\lesssim  \int_0^\infty e^{-rt} \big\| \widehat{X}_t^{i^n,n} - X_t \big\|_{L^p}\,\df t \\
&\le o(g_n^{-1}) \int_0^\infty t^q e^{-rt}\,\df t,
\end{align*}
which implies the desired result.
\end{proof}

\begin{as}\label{as:bg}
Let $X$ be a L\'evy process with Blumenthal-Getoor index $\beta_{\mathrm{BG}}$ is given by  
\[
  \beta_{\mathrm{BG}}
  := \inf\Bigl\{ r>0 : \int_{|z|\le 1} |z|^r \,\nu(dz) < \infty \Bigr\}\in[0,2]. 
\]
Fix $p>d$ such that $\|\Delta X_h\|_{L^p}<\infty$ for small $h>0$.
There exist constants $C>0$ and $\zeta>0$ with
\[
  \|\Delta X_h\|_{L^p}\ \le\ C\,h^\zeta,\qquad h\downarrow0,
\]
where $\zeta=1/2$ if a diffusion component is present, and $\zeta=\min\{1,1/\beta_{\mathrm{BG}}\}$ if $X$ is pure jump.
\end{as}

\begin{lemma}[Verification of \eqref{X-order} under pure jumps]\label{lem:nabla-h2:pj}
Let $h_n=n^{-\beta}$ with $\beta\in(0,1)$ and assume $X$ satisfies Assumption~\ref{as:bg} with a pure-jump structure so that $\zeta>1/2$.
Suppose $i^n$ is a uniformly random permutation of $\{1,\dots,n\}$ independent of $X$, and define
$\widehat X^{i^n,n}_{t_k}=\mu+\sum_{l=1}^{k}\Delta_{i^n(l)}$, $X_{t_k}=\mu+\sum_{l=1}^{k}\Delta_l$, $t_k=kh_n$.
Then for any $p>d$,
\[
\big\|\widehat X^{i^n,n}_{t_k}-X_{t_k}\big\|_{L^p}\ \lesssim_p\ t_k^{1/2}\,h_n^{\zeta-1/2}.
\]
In particular, since $\zeta>1/2$, condition \eqref{X-order} holds with $\g_n=n^{\beta/p}$ for any $q>1/2$:
\[
\sup_{i^n\in A_n}\big\|\widehat X^{i^n,n}_t-X_t\big\|_{L^p}
= t^{q}\,o\!\big(n^{-\beta/p}\big)\qquad(n\to\infty).
\]
\end{lemma}

\begin{proof}
Write $c(i^n):=\#\big(\{i^n(1),\dots,i^n(k)\}\cap\{1,\dots,k\}\big)$. Then
\[
\widehat X^{i^n,n}_{t_k}-X_{t_k}=\sum_{l=1}^{k}(\Delta_{i^n(l)}-\Delta_l),
\]
so by triangle inequality and independence,
\[
\big\|\widehat X^{i^n,n}_{t_k}-X_{t_k}\big\|_{L^p}
\le 2\,\Big(\E[(k-c(i^n))^p]\Big)^{1/p}\,\|\Delta X_{h_n}\|_{L^p}.
\]
Since $c(i^n)\sim\mathrm{Hypergeom}(n,k,k)$, standard bounds give
$\E[(k-c(i^n))^p]^{1/p}\lesssim_p \sqrt{k}$. By Assumption~\ref{as:bg}, $\|\Delta X_{h_n}\|_{L^p}\le C h_n^\zeta$.
Thus $\|\widehat X_{t_k}^{i^n,n}-X_{t_k}\|_{L^p}\lesssim \sqrt{k}\,h_n^\zeta=t_k^{1/2}h_n^{\zeta-1/2}$.
If $\zeta>1/2$, then for any $q>1/2$, $t_k^{1/2}h_n^{\zeta-1/2}\le t_k^q h_n^{\zeta-1/2}\!=t_k^q n^{-\beta(\zeta-1/2)}=t_k^q\,o(n^{-\beta/p})$ since $\zeta-1/2>1/p$ for large $p$.
\end{proof}

\begin{defn}[Near-identity permutation]\label{def:near-id}
For each $n$, let $\varepsilon_n\in(0,1)$ with $\varepsilon_n\downarrow0$. Construct a random permutation $i^n$ of $\{1,\dots,n\}$ as follows:
for each $l$, with probability $1-\varepsilon_n$ set $i^n(l)=l$, while with probability $\varepsilon_n$ draw $i^n(l)$ uniformly from $\{1,\dots,n\}\setminus\{l\}$ in a one-to-one manner.
We call $i^n$ a near-identity permutation with sparsity $\varepsilon_n$.
\end{defn}

\begin{lemma}[Verification of \eqref{X-order} with diffusion component]\label{lem:nabla-h2:bm} 
Let $h_n=n^{-\beta}$ with $\beta\in(0,1)$ and suppose $X$ satisfies Assumption~\ref{as:bg} with a Brownian component so that $\zeta=1/2$.
Let $i^n$ be a near-identity permutation with sparsity $\varepsilon_n\downarrow0$ as in Definition~\ref{def:near-id}, independent of $X$.
Then for any $p>d$ and $t_k=kh_n$,
\[
\big\|\widehat X^{i^n,n}_{t_k}-X_{t_k}\big\|_{L^p}\ \lesssim_p\ t_k^{1/2}\,\varepsilon_n^{1/2}.
\]
Consequently, if $\varepsilon_n=o\!\big(n^{-2\beta/p}\big)$, condition \eqref{X-order} holds with $\g_n=n^{\beta/p}$ for any $q>1/2$:
\[
\sup_{i^n\in A_n}\big\|\widehat X^{i^n,n}_t-X_t\big\|_{L^p}
= t^{q}\,o\!\big(n^{-\beta/p}\big)\qquad(n\to\infty).
\]
\end{lemma}

\begin{proof}
Let $M_k:=\#\{l\le k: i^n(l)\ne l\}$ be the number of mismatches among the first $k$ indices. Then $M_k\sim\mathrm{Bin}(k,\varepsilon_n)$ up to negligible dependence due to the one-to-one constraint, so $\E[M_k^{p}]^{1/p}\lesssim_p (k\varepsilon_n)^{1/2}$.
As before,
\[
\widehat X^{i^n,n}_{t_k}-X_{t_k}=\sum_{l=1}^{k}(\Delta_{i^n(l)}-\Delta_l),
\]
hence by triangle inequality and independence of increments,
\[
\big\|\widehat X^{i^n,n}_{t_k}-X_{t_k}\big\|_{L^p}\le 2\,\E[M_k^{p}]^{1/p}\,\|\Delta X_{h_n}\|_{L^p}.
\]
With a Brownian component, $\|\Delta X_{h_n}\|_{L^p}\asymp h_n^{1/2}$, so the bound is $\lesssim \sqrt{k\varepsilon_n}\,h_n^{1/2}=t_k^{1/2}\varepsilon_n^{1/2}$. If $\varepsilon_n=o(n^{-2\beta/p})$, then $t_k^{1/2}\varepsilon_n^{1/2}\le t_k^q\,o(n^{-\beta/p})$ for any $q>1/2$.
\end{proof}

\begin{remark}\label{rem:ex-rate:fixed}
Let $h_n=n^{-\beta}$ with $\beta\in(0,1)$ and $p>d$. 
In Lemmas~\ref{lem:nabla-h} and \ref{lem:nabla-h2:pj}--\ref{lem:nabla-h2:bm} we obtained $o(\g_n^{-1})$ controls with
\[
\g_n=n^{\beta/p}.
\]
By Theorem~\ref{thm:asymp.normal}, this yields
\[
\sqrt{\g_n}\big(\widehat\theta_n-\theta_0\big)\ \Rightarrow\ N(0,\Sigma),\qquad \sqrt{\g_n}=n^{\beta/(2p)},
\]
provided that the resampling number $a_n$ is chosen so that
\[
\frac{\g_n^2}{a_n}\to 0
\qquad\text{equivalently}\qquad 
a_n \gg \g_n^2.
\]
Here $\Sigma$ is the asymptotic covariance matrix in Theorem~\ref{thm:asymp.normal}.

\emph{Pure-jump case.} If $X$ is pure jump and Assumption~\ref{as:bg} gives $\zeta>1/2$, no extra condition is needed beyond $a_n\gg \g_n^2$.

\emph{Diffusion-inclusive case.} If $X$ contains a diffusion component, we employ a near-identity permutation with sparsity $\varepsilon_n\downarrow0$ (Definition~\ref{def:near-id}). In addition to $a_n\gg \g_n^2$, we must require
\[
\varepsilon_n=o\!\big(\g_n^{-2}\big)\qquad\text{equivalently}\qquad \varepsilon_n=o\!\big(n^{-2\beta/p}\big),
\]
which ensures the $o(\g_n^{-1})$ control under Lemma~\ref{lem:nabla-h2:bm}. Typical admissible choices are, for some $\eta,\zeta>0$,
\[
a_n=n^{2\beta/p+\eta}\quad\text{and}\quad \varepsilon_n=n^{-2\beta/p-\zeta}.
\]

Finally, since $p>d$ is arbitrary, letting $p\downarrow d$ attains the rate
\[
\sqrt{n^{\beta/d}}.
\]
\end{remark}

\begin{cor}\label{cor:ex-normal:final}
Under the assumptions of Theorems~\ref{thm:consist} and~\ref{thm:asymp.normal}, let $h_\theta$ be given by \eqref{edl} and suppose the assumptions of Lemmas~\ref{lem:h} and~\ref{lem:nabla-h} hold. If Lemma~\ref{lem:nabla-h2:pj} (pure-jump case) or Lemma~\ref{lem:nabla-h2:bm} (Diffusion-inclusive case with near-identity permutation) applies, then condition \eqref{rate} holds with $r_n:=\g_n$. In particular, if $h_n=n^{-\beta}$, $0<\beta<1$, and $p>d$, so that $\g_n=n^{\beta/p}$, and if $a_n=\#A_n$ satisfies $\g_n^2/a_n\to0$ (together with $\varepsilon_n=o(\g_n^{-2})$ in the diffusion case), then
\[
\sqrt{\g_n}\,(\widehat\theta_n-\theta_0)\ \Rightarrow\ N(0,\Sigma),
\]
with $\Sigma$ given in Theorem~\ref{thm:asymp.normal}. Letting $p\downarrow d$ yields the optimal rate $\sqrt{n^{\beta/d}}$.
\end{cor}

\begin{proof}
By Lemma~\ref{lem:nabla-h}, assumption \eqref{X-order} implies \eqref{rate} with $r_n=\g_n$. Thus the conditions of Theorem~\ref{thm:asymp.normal} are met. Applying that theorem gives the stated limit distribution with scaling $\sqrt{\g_n}$. The specialization $\g_n=n^{\beta/p}$ and the limit $p\downarrow d$ yield the displayed rates.
\end{proof}

\subsection{Dividends up to ruin}
As an illustration of the general result, consider the dividends problem. 
Here the functional $h_\theta$ in \eqref{edl} is specified by
\[
U_\theta(x,t)=a\,I_{\{x\ge\theta\}}\,I_{\{t\le g(\theta)\}}, 
\]
where $a>0$ is a constant and $g:\Theta\to\R_+$ is an increasing function in $\theta\in\Theta$. 
Given the surplus process $x=(x_t)_{t\ge0}$ of an insurance company, we define
\[
h_\theta(x)=\int_0^{\tau^x}e^{-rt}\,U_\theta(x_t,t)\,dt,
\]
which represents the aggregate dividends paid until ruin or until maturity $g(\theta)$, whichever occurs first. 
The dividend $a$ is paid whenever the surplus exceeds the threshold $\theta$, and the maturity depends on $\theta$. 
Intuitively, if the threshold level is high (dividends are difficult to pay), the maturity is longer, whereas if the threshold is low (dividends are easier to pay), the maturity is shorter. 
Since the ruin level is $d>0$, the threshold parameter must satisfy $\theta>d$, and we take $\Theta=[d,M]$ for some constant $M>0$.

For technical reasons we approximate the indicator functions by smooth surrogates. 
Let $\varphi_\varepsilon:\R\times\R\to[0,1]$ be smooth with bounded derivatives, such that for a small constant $\varepsilon>0$,
\[
\varphi_\varepsilon(u,z)=
\begin{cases}
1 & (u\ge z+\varepsilon),\\
0 & (u\le z-\varepsilon).
\end{cases}
\]
Then we replace
\[
I_{\{x_t\ge\theta\}}\;\Rightarrow\;\varphi_\varepsilon(x_t,\theta), 
\qquad
I_{\{t\le g(\theta)\}}\;\Rightarrow\;\varphi_\varepsilon(\theta,g^{-1}(t)).
\]
Accordingly, the functional $h_\theta$ is approximated by
\[
h_\theta^\varepsilon(x)
=a\int_0^{\tau^x}e^{-rt}\,\varphi_\varepsilon(x_t,\theta)\,\varphi_\varepsilon(\theta,g^{-1}(t))\,dt.
\]
By the bounded convergence theorem this approximation is valid:
\[
\lim_{\varepsilon\to0}h_\theta^\varepsilon(x)
=a\int_0^{\tau^x}e^{-rt}\,\lim_{\varepsilon\to0}\big[\varphi_\varepsilon(x_t,\theta)\,\varphi_\varepsilon(\theta,g^{-1}(t))\big]\,dt
=h_\theta(x),\qquad x\in\mathbb D_T.
\]
Moreover, since the derivatives of $\varphi_\varepsilon$ are uniformly bounded, we have for each fixed $\varepsilon>0$ and $k\in\mathbb N$,
\[
\nabla_\theta^k U_\theta^\varepsilon(x,t)
=a\,\nabla_\theta^k\big[\varphi_\varepsilon(x_t,\theta)\,\varphi_\varepsilon(\theta,g^{-1}(t))\big],
\]
which satisfies the boundedness and Lipschitz conditions \eqref{U-bound} and \eqref{U-lipschitz} in Lemma~\ref{lem:h}. 
Consequently, all assumptions of Theorem~\ref{thm:asymp.normal} are satisfied for $h_\theta^\varepsilon(X)$ with a L\'evy process $X=(X_t)_{t\ge0}$. 
Therefore the asymptotic normality established in Theorem~\ref{thm:asymp.normal} applies to the dividend functional in this smoothed form. 
Further examples of similar type are discussed in Shimizu and Shiraishi~\cite{ss21}. 

\section{Concluding remarks}

In this paper we have developed a quasi-process approach for the estimation of path-dependent functionals of L\'evy processes from discretely observed data. 
By resampling increments to construct pseudo-paths, our method enables the evaluation of expectations that depend on entire trajectories, while avoiding the need for full-path simulation. 
We established weak convergence of the quasi-process under a high-frequency, long-term regime and proved consistency and asymptotic normality of the resulting $M$-estimator. 
These results demonstrate that reliable statistical inference is feasible even when only a single discrete trajectory is available, which is a setting of high practical relevance in insurance and finance.

The proposed methodology provides a versatile alternative to classical simulation-based approaches, combining computational efficiency with robustness against model uncertainty. 
Unlike standard Monte Carlo path-simulation, our framework requires no knowledge of the full dynamics beyond the increment distribution, and thus it can be applied even when model specification is uncertain or incomplete. 
From a theoretical perspective, the quasi-process construction also sheds light on the structure of path-dependent functionals, bridging ideas from empirical process theory, weak convergence of triangular arrays, and M-estimation. 
The general asymptotic normality theorem developed in this paper illustrates the broad applicability of the approach.

There are several promising directions for future research. 
One is the extension to multivariate L\'evy processes, where dependence structures between components may play a crucial role in applications such as portfolio credit risk or systemic risk analysis. 
Another is to combine the quasi-process with modern statistical and machine learning techniques, for example to guide resampling schemes, to perform variance reduction, or to integrate external covariate information. 
Further work could also consider irregular or endogenous sampling schemes, or explore the efficiency of the estimator under various information structures. 
On the applied side, potential areas include solvency assessment, risk-based capital allocation, and stress testing under regulatory frameworks, where reliable estimation of ruin probabilities or dividend-related quantities is essential.

We believe that the quasi-process framework offers a promising foundation for both theoretical developments and practical applications. 
By providing a statistically rigorous and computationally efficient tool for handling path-dependent functionals of L\'evy processes, this approach has the potential to enrich the toolbox available to both researchers and practitioners in probability, statistics, finance, and insurance.

\begin{flushleft}
{\bf Acknowledgments}\ \ 
The first author was partially supported by JSPS KAKENHI Grant Numbers JP21K03358 and JST CREST JPMJCR14D7, Japan. 
The second author was supported by JSPS KAKENHI Grant Number JP16K00036.
\end{flushleft}

\end{document}